\documentclass[11pt,oneside,letterpaper]{article}
\usepackage{fullpage}
\usepackage{times,fancyhdr}
\usepackage[dvips]{graphicx}
\usepackage{subfigure}
\usepackage{amsmath, amsfonts, amssymb, amsthm}
\usepackage{pstricks}
\usepackage{hyperref}
\usepackage{mathrsfs}
\usepackage{algorithm,algorithmicx,algpseudocode}
\usepackage{comment}


\theoremstyle{definition}
\newtheorem{Thm}{Theorem}[section]
\newtheorem{Lem}[Thm]{Lemma}
\newtheorem{Cor}[Thm]{Corollary}

\newtheorem{Def}[Thm]{Definition}

\newtheorem{Prop}[Thm]{Proposition}
\newtheorem{Obse}[Thm]{Observation}

\newcommand{\concept}[1]{\emph{#1}}

\begin{document}

\title{Approximate Capacities of Two-Dimensional Codes\\ by Spatial Mixing}
\author{Yi-Kai Wang\thanks{State Key Laboratory for Novel Software Technology, Nanjing University, China.}~\thanks{Supported by NSFC grants 61272081 and 61321491. Email: \texttt{yikai.wang@hotmail.com}.}
  \and Yitong Yin\footnotemark[1]~\thanks{Supported by NSFC grants 61272081 and 61321491. Email: \texttt{yinyt@nju.edu.cn}.}
  \and Sheng Zhong\footnotemark[1]~\thanks{Supported by RPGE, NSFC-61321491, and NSFC-61300235. Email: \texttt{zhongsheng@nju.edu.cn}.}}

\date{}

\maketitle

\begin{abstract}
We apply several techniques developed in recent years for counting algorithms and statistical physics to study the \emph{spatial mixing} property of two-dimensional codes arising from local hard (independent set) constraints, including: hard-square, hard-hexagon, read/write isolated memory (RWIM), and non-attacking kings (NAK). For these constraints, the existence of strong spatial mixing implies the existence of polynomial-time approximation scheme (PTAS) for computing the capacity.
The existence of strong spatial mixing and PTAS was previously known for the hard-square constraint.
We show the existence of strong spatial mixing for hard-hexagon and RWIM constraints by establishing the strong spatial mixing along self-avoiding walks, which implies PTAS for computing the capacities of these codes.  We also show that for the NAK constraint, the strong spatial mixing does not hold along self-avoiding walks.
\end{abstract}

\section{Introduction}

We consider codes consisting of two-dimensional binary patterns of 0's and 1's arranged in rectangles, satisfying constraints of forbidding certain local patterns.
The capacity (or entropy) of a two-dimensional code measures the maximum rate that the information can be transmitted through this representation. Computation of capacities of two-dimensional codes is highly nontrivial and has implications in information theory~\cite{baxter,calkin,weeks,kato,nagy,cohn,golin,sabato,pavlov,marcus}, probability~\cite{pavlov,marcus}, statistical physics~\cite{gamarnik2009sequential} and computation theory~\cite{hochman}.
A recent breakthrough~\cite{hochman} in symbolic dynamics shows that capacities of two-dimensional codes  characterize the class of Turing-computable reals. It is then a natural and fundamental problem to further classify the capacities of two-dimensional codes which can be computed \emph{efficiently}, that is, in polynomial time.


In this paper, we consider the class of two-dimensional codes which: (1) can be described by local (up-to distance 2) forbidden patterns, and (2) arise from hard (independent set) constraints. This gives us precisely the following well-studied constraints for two-dimensional codes: hard square (HS)~\cite{calkin,pavlov}, hard hexagon (HH)~\cite{baxter}, read/write isolated memory (RWIM)~\cite{cohn,golin} and non-attacking kings (NAK)~\cite{weeks}. Among these constraints, the capacities of hard-square and hard-hexagon are known to be efficiently computable~\cite{baxter,pavlov,marcus}. The efficiency of computation of the capacity of RWIM or NAK is still unknown.



It was first discovered in~\cite{gamarnik2009sequential} an intrinsic connection between efficient computation of capacities of two-dimensional codes and the property of being \emph{strong spatial mixing} (SSM), a notion originated from the phase transition of correlation decay in statistical physics. Being strong spatial mixing means the correlation between any bits that are far away from each other decays rapidly, and hence the capacity can be efficiently estimated from local structure. 


In a seminar work~\cite{weitz},  the strong spatial mixing is introduced and is proved for independent sets of graphs along self-avoiding walks, another essential object in statistical physics~\cite{madras2013self}. Specifically, for independent sets, the maximum degree 5 is a phase transition threshold such that strong spatial mixing holds for all graphs with maximum degree at most 5, but there are graphs of any maximum degree greater than 5 without spatial mixing.
A direct consequence is an efficient algorithm for approximately computing hard-square (HS) entropy, because the hard-square constraint can be interpreted as independent sets of two-dimensional grid, whose degree is less than 5.
For the more complicated constraints of HH, RWIM, and NAK, which correspond to independent sets of graphs of degree 6 or 8, we need stronger tools than the generic ones used in~\cite{weitz} to verify the existence of strong spatial mixing and efficient algorithm for computing the capacity, or show evidence saying that they may not exist.

\subsection{Contributions}
Previously it is known that strong spatial mixing holds for the hard-square constraint~\cite{weitz} and there exists a polynomial-time approximation scheme (PTAS) for computing its capacity~\cite{pavlov, marcus}.
To analyze the spatial mixing of two-dimensional codes arising from the aforementioned constraints, we apply several techniques from the state of the art of counting algorithms and statistical physics, including: self-avoiding walk tree~\cite{weitz}, sequential cavity method~\cite{gamarnik2009sequential}, branching matrix~\cite{restrepo}, the potential function proposed in~\cite{li2}, connective-constant-based strong spatial mixing~\cite{sinclair}, and the necessary condition for correlation decay in~\cite{vera}. We make the following discoveries:
\begin{enumerate}
\item Strong spatial mixing holds for the hard-hexagon (HH) and the read/write isolated memory (RWIM) constraints.
\item Consequently, there exist PTAS for computing the capacities of HH and RWIM constraints.\footnote{Although the hard-hexagon entropy is known to be exact solvable due to its special structure~\cite{baxter}, we remark that the strong spatial mixing of this important model is interesting by itself.}
\item For the non-attacking-kings (NAK) constraint, strong spatial mixing does not hold along self-avoiding walks .
\end{enumerate}
This gives the first algorithm with provable efficiency for computing the capacity of RWIM constraint and the first strong spatial mixing results for both hard-hexagon and RWIM constraints, and also shows that the NAK constraint might not enjoy sufficient spatial mixing to support efficient computation of capacity.

\subsection{Related work}

Computing the capacities for different constrained codes has been studied extensively. In~\cite{baxter}, the exact solution of hard-hexagon entropy was given. The method introduced in~\cite{calkin} connects the number of independent sets to the capacity and shows the existence and bounds for capacities of certain important two-dimensional $(d,k)$ run-length constraints. In~\cite{kato}, the $(d,k)$ run-length constraints with positive capacities are fully characterized. In~\cite{nagy}, the bounds for the capacity of three-dimensional $(0,1)$ run-length constrained channel is given.
In~\cite{weeks}, numerical bounds on the capacity of NAK was given. In \cite{cohn}, the bounds for the RWIM capacity is given.
In~\cite{sabato}, belief propagation is used to analyze the capacities of two-dimensional and three-dimensional run-length limited constraint codes.
In~\cite{gamarnik2009sequential}, sequential cavity method was used to show PTAS for computing the free energy and surface pressure for various statistical mechanics models on $\mathbb{Z}^d$, which covers hard-square entropy and matching.
In~\cite{pavlov}, a PTAS for computing the hard-square entropy is given by ergodic theoretic techniques and methods from percolation theory.
In \cite{marcus}, it is proved that if any nearest neighbor two-dimensional shift of finite types exhibits SSM, then there is a PTAS for computing the entropy.

The strong spatial mixing was introduced in~\cite{weitz} for counting algorithms. The self-avoiding walk tree was introduced in~\cite{weitz} to deal with Boolean-state pairwise constraints, and was generalized in~\cite{bayati} and~\cite{nair2007correlation} into its full-fledged power to deal with matching, and multi-state multi-wise constraints.
These techniques were improved in a series of works~\cite{restrepo,li2,sinclair,sinclair2,vera}.

\newcommand{\cons}{{\mathit{Cons}}}
\newcommand{\lattice}{\mathbb{L}}
\newcommand{\HS}{{\mathrm{HS}}}
\newcommand{\HH}{{\mathrm{HH}}}
\newcommand{\RWIM}{{\mathrm{RWIM}}}
\newcommand{\NAK}{{\mathrm{NAK}}}
\newcommand{\IS}{{\#\mathrm{IS}}}

\section{Preliminaries}
\subsection{Two-dimensional codes from hard constraints}
A two-dimensional binary codeword is an $m\times n$ matrix of Boolean (0 and 1) entries. Let $P$ be a $k\times\ell$ Boolean matrix, called a pattern.
A two-dimensional binary codeword $W$ is said to contain pattern $P$ if $P$ is a submatrix of $W$, respecting the relative positions. Formally, there exist $i$ and $j$ such that for any $1\le s\le k$ and $1\le t\le \ell$, it holds that $W(i+s-1,j+t-1)=P(s,t)$. 
We consider the following two-dimensional codes defined by forbidding certain patterns.
\begin{enumerate}
\item Hard square (HS) constraint: A codeword does not contain patterns
\begin{equation*}
\begin{pmatrix}
1 & 1
\end{pmatrix}
\text{ and }
\begin{pmatrix}
1 \\
1
\end{pmatrix}.
\end{equation*}
The constraint forbids any horizontally or vertically consecutive 1's.

\item Hard hexagon (HH) constraint: A codeword does not contain patterns
\begin{equation*}
\begin{pmatrix}
1 & 1
\end{pmatrix},
\begin{pmatrix}
1 \\
1
\end{pmatrix}
\text{ and }
\begin{pmatrix}
0 & 1 \\
1 & 0
\end{pmatrix}.
\end{equation*}
The constraint forbids any horizontally, vertically, or anti-diagonally consecutive 1's.

\item Read/write isolated memory (RWIM) constraint: A codeword does not contain patterns
\begin{equation*}
\begin{pmatrix}
1 & 1
\end{pmatrix},
\begin{pmatrix}
1 & 0 \\
0 & 1
\end{pmatrix}
\text{ and }
\begin{pmatrix}
0 & 1 \\
1 & 0
\end{pmatrix}.
\end{equation*}
The constraint forbids any horizontally, diagonally or anti-diagonally consecutive 1's.
The horizontal pattern corresponds to the \textit{read restriction}: no two consecutive positions may store 1's simultaneously.
The two diagonal patterns correspond to the \textit{write restriction}: no two consecutive positions in the memory can be changed during one rewriting phase.

\item Non-attacking kings (NAK) constraint: A codeword does not contain patterns
\begin{equation*}
\begin{pmatrix}
1 & 1
\end{pmatrix},
\begin{pmatrix}
1 \\
1
\end{pmatrix},
\begin{pmatrix}
0 & 1 \\
1 & 0
\end{pmatrix}
\text{ and }
\begin{pmatrix}
1 & 0 \\
0 & 1
\end{pmatrix}.
\end{equation*}
The constraint forbids any horizontally, vertically, diagonally or anti-diagonally consecutive 1's.
\end{enumerate}

Throughout the paper, we assume $\cons\in\{\HS,\HH,\RWIM,\NAK\}$ to be one of the constraints defined as above. Let $N^\cons_{m,n}$ be the number of $m\times n$ Boolean matrices satisfying constraint $\cons$.
We observe that the above two-dimensional codes can be equivalently defined as independent sets of certain lattice graphs. In fact, restricting to the two-dimensional codes defined by \concept{local} forbidden patterns (of dimension up to 2), they are the all four cases which can be described as independent sets.\footnote{Other forbidden patterns of dimension up to 2 may also define independent sets, such as the one described by $\begin{pmatrix}
0 & 1 \\
1 & 0
\end{pmatrix}$ and $\begin{pmatrix}
1 & 0 \\
0 & 1
\end{pmatrix}$, however, this case is just a union of two disjoint instances of hard square.}

Let $\mathbb{Z}$ be the integer field. Let $\lattice^\cons=(\mathbb{Z}^2,E^\cons)$ be the infinite lattice graph with vertex set $\mathbb{Z}^2$ whose edge set $E^\cons$ is defined as follows, respectively:
\begin{align*}
E^{\HS}
&= \{((i,j),(i+1,j)),((i,j),(i,j+1)): \forall i,j \in \mathbb{Z}^2\};\\
E^{\HH}
&= \{((i,j),(i+1,j)),((i,j),(i,j+1)),((i,j),(i+1,j-1)): \forall i,j \in \mathbb{Z}^2\};\\
E^{\RWIM}
&= \{((i,j),(i,j+1)),((i,j),(i+1,j+1)),((i,j),(i+1,j-1)): \forall i,j \in \mathbb{Z}^2\};\\
E^{\NAK}
&= \{((i,j),(i+1,j)),((i,j),(i,j+1)),((i,j),(i+1,j+1)),((i,j),(i+1,j-1)): \forall i,j \in \mathbb{Z}^2\}.
\end{align*}
The lattices $\lattice^\HS$ and $\lattice^\HH$ are just two-dimensional grid lattice and hexagonal lattice, respectively. The lattices $\lattice^\HH$, $\lattice^\RWIM$, and $\lattice^\NAK$ are shown in Figure~\ref{fig:lattice}.

\begin{figure*}
\centering
\begin{minipage}[b]{0.3\textwidth}
\includegraphics[width=1\textwidth]{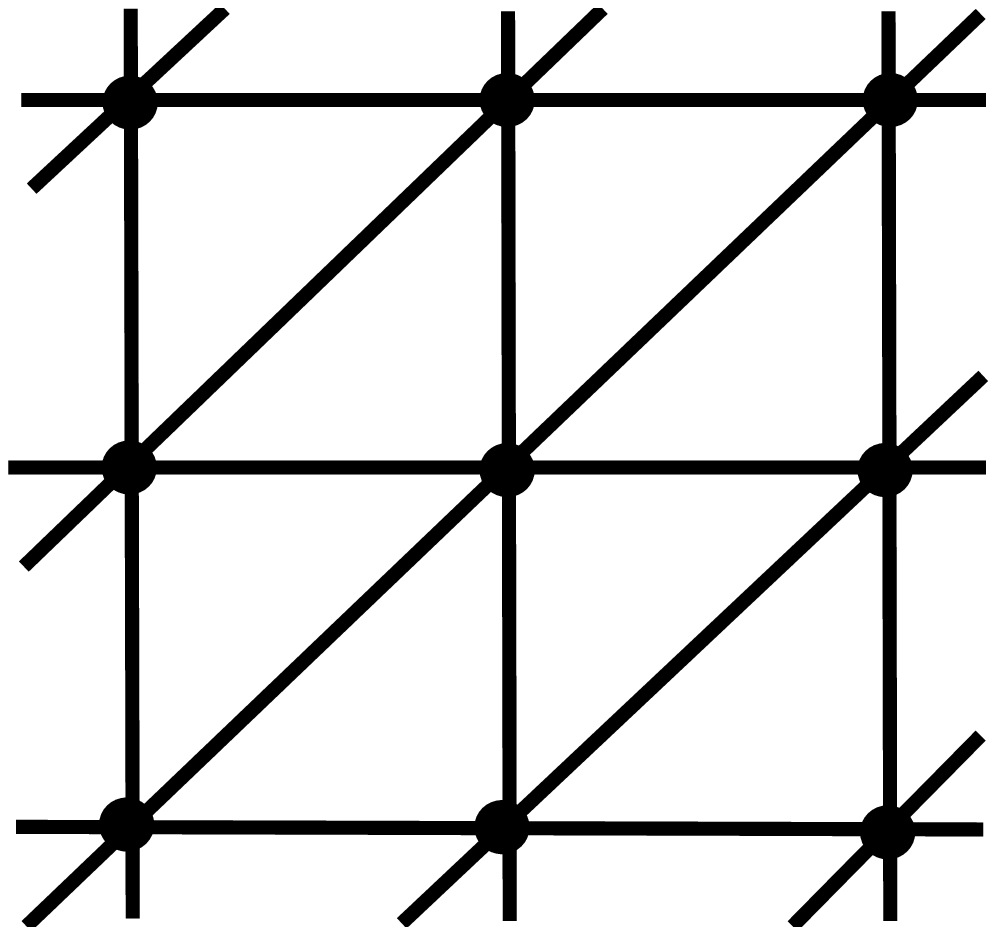}
\end{minipage}
\begin{minipage}[b]{0.3\linewidth}
\includegraphics[width=1\textwidth]{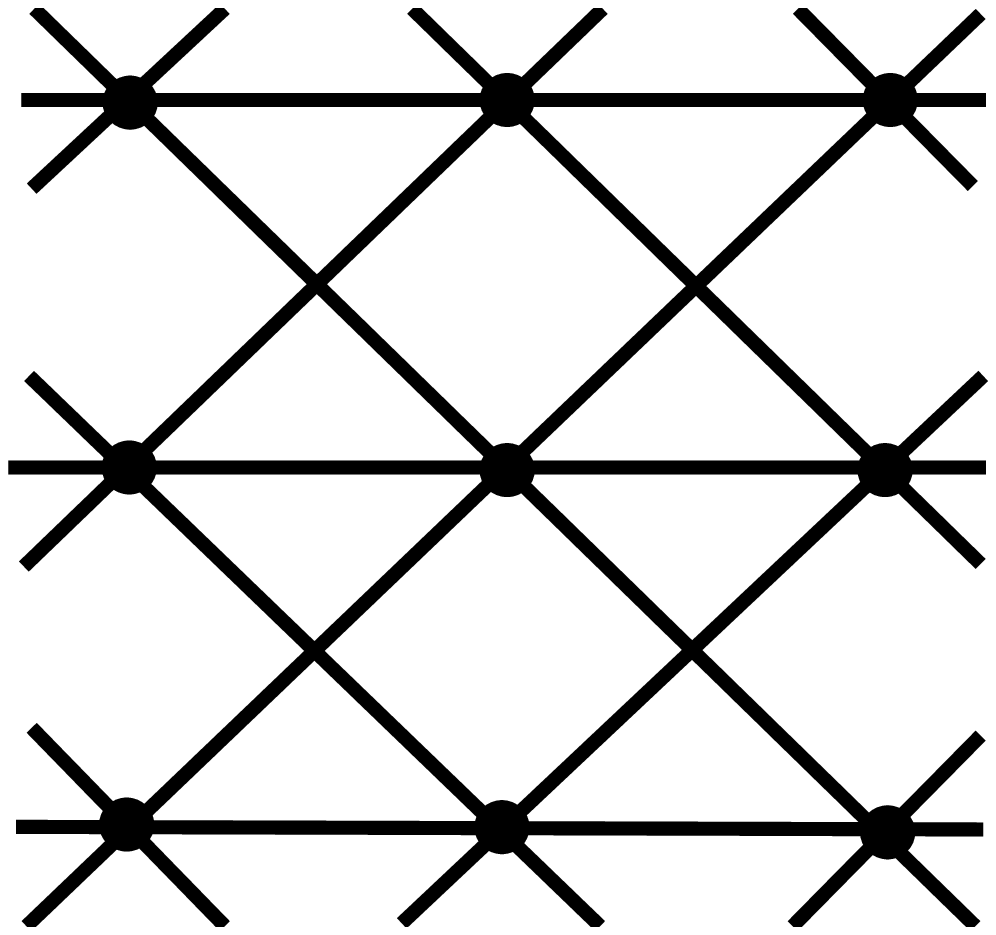}
\end{minipage}
\begin{minipage}[b]{0.3\textwidth}
\includegraphics[width=1\textwidth]{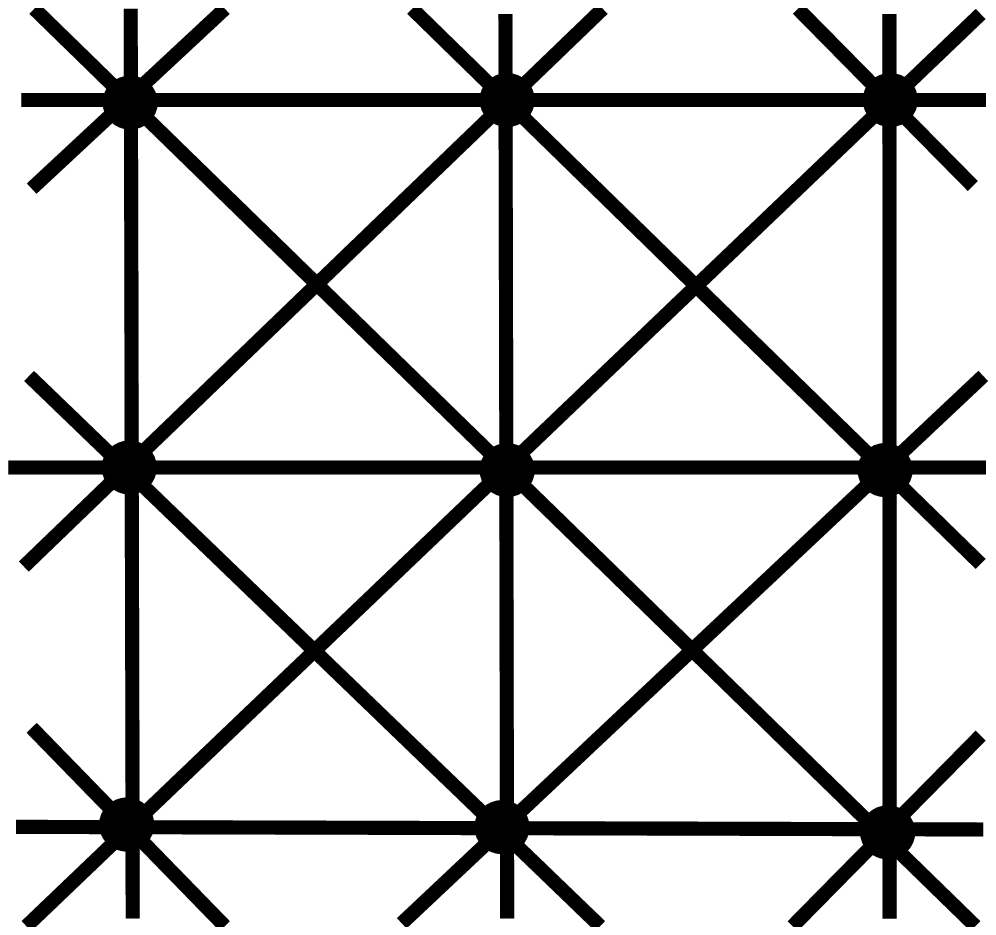}
\end{minipage}
\caption{Local structures of $\mathbb{L}^{\text{HH}}$, $\mathbb{L}^{\text{RWIM}}$ and $\mathbb{L}^{\text{NAK}}$}\label{fig:lattice}

\end{figure*}

We denote by $\lattice^\cons_{m,n}=\lattice^\cons[[m]\times[n]]$ the induced subgraph of $\lattice^\cons$ on the finite vertex set $[m]\times[n]$.
It is easy to verify that for the considered constraints $\cons\in\{\HS,\HH,\RWIM,\NAK\}$, we have
\[
N^\cons_{m,n}=\IS(\lattice^\cons_{m,n}),
\]
where $\IS(G)$ denotes the number of independent sets of graph $G$.  Moreover, each codeword of size $m\times n$ satisfying constraint $\cons$ corresponds to a unique independent set of lattice $\lattice^{\cons}_{m,n}$ such that the 1's in the codeword indicates the vertices in the independent set.

\begin{Def}
Let $\cons\in\{\HS,\HH,\RWIM,\NAK\}$. The \concept{capacity} of constraint $\cons$, denoted $C_\cons$, is defined by
\begin{align*}
C_{\cons} = \lim_{m,n \to \infty} \frac{\log_2 N^\cons_{m,n}}{m n}=
\lim_{m,n \to \infty} \frac{\log_2 \#\mathrm{IS}(\lattice^\cons_{m,n})}{m n}.
\end{align*}
\end{Def}

\subsection{Spatial Mixing}
We adopt notions and terminologies from statistical physics to describe the probability space of independent sets. Let $G(V,E)$ be a graph, where each vertex is in one of the two states $\{0,1\}$, such that state 1 is called \concept{occupied} and state 0 is called \concept{unoccupied}.
Each \concept{configuration} $\sigma\in\{0,1\}^V$ indicates a subset $I_\sigma\subseteq V$ of vertices such that a vertex is occupied means it is in $I_\sigma$. 
Given a finite graph $G(V,E)$, the uniform probability measure $\mu_G$ over all independent sets of $G$ can be defined as
\[
\forall \sigma\in\{0,1\}^V,\quad
\mu_G(\sigma)=\begin{cases}
\frac{1}{Z_G} & I_\sigma\mbox{ is an indepdent set in }G,\\
0 & \mbox{otherwise},
\end{cases}
\]
where $Z_G=\IS(G)$ is the number of independent sets of $G$. Here $Z_G$ is also called the \concept{partition function}, and $\mu_G$ is called the \concept{Gibbs measure} of independent sets.

Given $G(V,E)$, let $v\in V$ be a vertex and $\rho\in\{0,1\}^\Lambda$ an independent set of vertex set $\Lambda\subset V$.  Let $p_{G,v}^\rho$ denote the marginal probability of $v$ being unoccupied conditioned on configuration $\rho$, that is
\begin{align*}
p_{G,v}^\rho=\Pr_{\sigma\sim\mu_G}[\sigma_v=0\mid \sigma_{\Lambda}=\rho].
\end{align*}
For this conditional probability space, we say that the vertices in $\Lambda$ are \concept{fixed} to be $\rho$.

For infinite graph $G$, the uniform probability measure $\mu_G$ cannot be defined since uniform distribution cannot be defined on countably infinite set. So we deal with finite subgraphs. 

\concept{Spatial mixing}, also called the \concept{correlation decay}, is the property that the influence of an arbitrarily fixed boundary on the marginal distribution on a vertex decreases  exponentially as the distance between them grows. It is one of the essential concepts in Statistical Physics.

\begin{Def}[Weak Spatial Mixing] The independent sets of an infinite graph $G(V,E)$ exhibits \concept{weak spatial mixing (WSM)} if there exist constants $\beta,\gamma>0$, such that for every finite vertex set $U \subset V$, every $v \in U$, and any two independent sets $\sigma$, $\tau$ of the vertex boundary $\partial U = \{w\not\in U|uw\in E,u\in U\}$, it holds that
\begin{equation*}
\left|p_{G,v}^{\sigma} - p_{G,v}^{\tau}\right| \le \beta\cdot\exp(-\gamma\cdot\text{dist}(v,\partial U)),
\end{equation*}
where $\text{dist}(v,\partial U)$ is the shortest distance between $v$ and any vertex in $\partial U$.
\end{Def}

In order for algorithmic applications, we need a stronger version of spatial mixing, called the \concept{strong spatial mixing}, which is introduced in~\cite{weitz}.

\begin{Def}[Strong Spatial Mixing] The independent sets of an infinite graph $G(V,E)$ exhibits \concept{strong spatial mixing (SSM)} if there exist constants $\beta,\gamma>0$, such that for every finite vertex set $U \subset V$, every $v \in U$, and any two independent sets $\sigma$, $\tau$ of the vertex boundary $\partial U = \{w\not\in U|uw\in E,u\in U\}$, it holds that
\begin{equation*}
\left|p_{G,v}^{\sigma} - p_{G,v}^{\tau}\right| \le \beta\cdot\exp(-\gamma\cdot\text{dist}(v,\Delta)),
\end{equation*}
where $\Delta \subseteq \partial U$ is the set of vertices on which $\sigma$ and $\tau$ differ and $\text{dist}(v,\Delta)$ is the shortest distance between $v$ and any vertex in $\Delta$.
\end{Def}

The difference between WSM and SSM is that SSM requires that the correlation decay still holds even with the configuration of a subset $\partial U\setminus\Delta$ of nearby vertices to $v$ to be arbitrarily fixed.
It is easy to see that SSM implies WSM. Moreover, we have the following easy but useful proposition for independent sets.

\begin{Prop}\label{prop:1}
If an infinite tree $T$ exhibits SSM, then all subtrees of $T$ exhibit SSM.
\end{Prop}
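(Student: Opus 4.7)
The plan is to transfer SSM on $T$ directly to SSM on every subtree $T'$ by exploiting the basic feature of the independent-set model (hard core with $\lambda=1$): fixing a vertex to the unoccupied state $0$ is equivalent to deleting that vertex, since state $0$ imposes no constraint on its neighbors. This will let me simulate the absence of vertices in $V(T)\setminus V(T')$ by $0$-pinnings inside the Gibbs measure on $T$.

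Concretely, I would fix an arbitrary finite $U\subseteq V(T')$, a vertex $v\in U$, and two independent-set configurations $\sigma,\tau$ on $\partial_{T'}U$. Viewing $T'$ as a connected induced subgraph of $T$, one has $\partial_{T'}U=\partial_{T}U\cap V(T')$. I extend $\sigma$ and $\tau$ to configurations $\sigma^{\ast},\tau^{\ast}$ on $\partial_{T}U$ by assigning the value $0$ to every vertex of $\partial_{T}U\setminus V(T')$. Because $T$ is a tree, $\partial_{T}U$ separates $U$ from the rest of $T$, so by the spatial Markov property the conditional marginal at $v$ depends only on the finite induced subgraph on $U\cup\partial_{T}U$; the extra vertices lying in $\partial_{T}U\setminus V(T')$ are all forced to $0$ under $\sigma^{\ast}$ and $\tau^{\ast}$ and contribute no effective constraint, which yields $p_{T,v}^{\sigma^{\ast}}=p_{T',v}^{\sigma}$ and likewise $p_{T,v}^{\tau^{\ast}}=p_{T',v}^{\tau}$.

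With this reduction in hand, I would apply the SSM of $T$ to obtain
\[
\bigl|p_{T',v}^{\sigma}-p_{T',v}^{\tau}\bigr|=\bigl|p_{T,v}^{\sigma^{\ast}}-p_{T,v}^{\tau^{\ast}}\bigr|\le \beta\exp(-\gamma\cdot\mathrm{dist}_{T}(v,\Delta)),
\]
where $\Delta\subseteq\partial_{T'}U$ is the set on which $\sigma$ and $\tau$ disagree (equivalently, the set on which $\sigma^{\ast}$ and $\tau^{\ast}$ disagree, because the extension only adds agreeing $0$-values). Since $T$ is a tree and $T'$ is a connected induced subtree, the unique $T$-path between any two vertices of $T'$ lies entirely in $T'$, giving $\mathrm{dist}_{T}(v,w)=\mathrm{dist}_{T'}(v,w)$ for every $w\in\Delta$. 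Substituting $\mathrm{dist}_{T'}$ for $\mathrm{dist}_{T}$ on the right-hand side establishes SSM on $T'$ with the same constants $\beta,\gamma$.

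There is no serious obstacle here; the one point that must be handled with care is the reduction from $T$ to $T'$ by $0$-pinning, which is a feature of the independent-set (hard-core at unit activity) setting and should be justified by explicitly comparing the corresponding partition functions, noting that the factor contributed by a vertex fixed to $0$ coincides with the partition function of the graph with that vertex simply deleted.
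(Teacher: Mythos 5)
Your proposal is correct and is exactly the paper's argument, just written out in full: the paper proves this proposition with the single observation that fixing vertices to be unoccupied effectively prunes the tree to an arbitrary subtree while SSM is preserved, which is precisely your $0$-pinning reduction together with the preservation of tree distances in a subtree.
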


The proposition is implied by the simple observation that fixing vertices to be unoccupied effectively prunes the tree to an arbitrary subtree, while the SSM still holds.

\subsection{Self-avoiding walk tree}\label{section-SAW}
\newcommand{\saw}{\mathrm{SAW}}

On trees, there is an easy recursion for marginal probabilities.
Let $T$ be a tree rooted by $v$, and $v_1,v_2,\ldots,v_d$ the children of $v$. For each $1\le i\le d$, let $T_i$ denote the subtree of $T$ rooted by $v_i$. For any independent set $\sigma$ of a subset $\Lambda$ of vertices in $T$, consider the ratio $R_{T}^\sigma = (1-p^\sigma_{T,v})/p^\sigma_{T,v}$ between marginal probabilities of being occupied and unoccupied at vertex $v$. The following recursion is well known:
\begin{equation}
R_{T}^{\sigma} = \prod_{i=1}^d \frac{1}{1 + R_{T_i}^{\sigma_i}}, \label{eq-ratio-recursion}
\end{equation}
where $\sigma_i$ is the restriction of $\sigma$ on $T_i$.

The \concept{self-avoiding walk (SAW) tree} is introduced in \cite{weitz} to transform a graph into a tree while preserving the marginal probability.
Let $G(V,E)$ be a graph, finite or infinite. For each vertex $u\in V$, we fix an arbitrary order $>_u$ for the neighbors of $u$. Let $v\in V$ be an arbitrary vertex.
A tree $T$ rooted by $v$ can be naturally constructed from all self-avoiding walks $v =v_0\to v_1 \to \ldots \to v_\ell $ starting from vertex $v\in V$, 
after which for any walk $v\to\ldots \to w \to v_k \to\ldots \to v_\ell$ such that $\{v_\ell, w\}\in E$ and $v_\ell >_w v_k$, we delete the corresponding node and the subtree from $T$. The resulting tree is denoted as $T=T_{\saw}(G,v)$. This construction identifies each vertex in $T$ (many-to-one) to a vertex in $G$. Thus for any independent set $\sigma$ of $\Lambda\subset V$, we have a corresponding configuration in $T$, which is still denoted as $\sigma$ by abusing the notation.

\begin{Thm}[Weitz~\cite{weitz}]\label{thm-saw}
For any finite graph $G(V,E)$, $v \in V$, $\Lambda \subset V$ and any independent set $\sigma$ of $\Lambda$, it holds that 
\begin{equation*}
p_{G,v}^{\sigma} = p_{T,v}^{\sigma},
\end{equation*}
where $T = T_{\saw}(G,v)$.
\end{Thm}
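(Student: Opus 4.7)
The plan is to induct on $|V|$. In the base case, when $v$ has no neighbors outside $\Lambda$, its marginal is determined entirely by $\sigma$, and $T_\saw(G,v)$ is trivially a single node giving the same marginal. For the inductive step, the core idea is to reduce $R_{G,v}^\sigma$ to a product of smaller instances in the same form as the tree recursion \eqref{eq-ratio-recursion}. Let $v_1, v_2, \ldots, v_d$ be the neighbors of $v$ that lie outside $\Lambda$, listed in the order $v_1 >_v v_2 >_v \cdots >_v v_d$. Since any independent set with $\sigma_v = 1$ forces $\sigma_{v_i} = 0$ for every $i$, we may telescope:
\begin{equation*}
R_{G,v}^\sigma \;=\; \frac{\#\{\text{IS of }G-v\text{ consistent with }\sigma,\ \sigma_{v_1}=\cdots=\sigma_{v_d}=0\}}{\#\{\text{IS of }G-v\text{ consistent with }\sigma\}} \;=\; \prod_{i=1}^d \frac{1}{1 + R_{G^{(i)}, v_i}^{\sigma^{(i)}}},
\end{equation*}
where $G^{(i)} := G-v$ and $\sigma^{(i)}$ extends $\sigma$ by fixing $v_1,\ldots,v_{i-1}$ to $0$.

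Since $|V(G^{(i)})| = |V| - 1$, the inductive hypothesis gives $R_{G^{(i)}, v_i}^{\sigma^{(i)}} = R_{T^{(i)}, v_i}^{\sigma^{(i)}}$ where $T^{(i)} := T_\saw(G^{(i)}, v_i)$. Combined with the tree recursion \eqref{eq-ratio-recursion}, this yields $R_{G,v}^\sigma = R_{T',v}^\sigma$, where $T'$ is the tree with root $v$ whose subtrees are the $T^{(i)}$ equipped with the respective partial configurations $\sigma^{(i)}$. It remains to argue that $T'$ has the same marginal at $v$ as $T_\saw(G,v)$ --- equivalently, that the two trees agree after absorbing the fixed-to-$0$ vertices, whose contribution to the recursion is the trivial factor $1/(1+0) = 1$.

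The main obstacle is precisely this last matching. I would argue it by identifying each SAW $v \to v_i \to u_2 \to \cdots \to u_\ell$ in $T_\saw(G,v)$ with the walk $v_i \to u_2 \to \cdots \to u_\ell$ viewed as a walk in $G^{(i)}$. A walk that would close back to $v$ via an endpoint $v_j \in N(v)$ survives the paper's pruning condition iff $v_j <_v v_i$, i.e., iff $j > i$ in our ordering, i.e., iff $v_j$ has not been fixed to $0$ in $\sigma^{(i)}$. When $v_j$ has been fixed, its occurrence in $T^{(i)}$ contributes the trivial factor $1$, which is indistinguishable from the paper's deletion of the corresponding node and subtree; so the only discrepancy between $T'$ and $T_\saw(G,v)$ lies in such fixed-$0$ subtrees, which do not affect $R$. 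The same argument applied recursively at every internal vertex $w$ shows that the ordering-based pruning at $w$ matches the fixing effect obtained by telescoping the cavity identity at $w$, and consistency of the global ordering $\{>_u : u \in V\}$ ensures these local matchings align.
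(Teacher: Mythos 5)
The paper does not prove this statement---it is quoted as Weitz's theorem and cited to~\cite{weitz}---so there is no in-paper proof to compare against; your argument is a correct reconstruction of Weitz's original one: induction on $|V|$ via the telescoped cavity identity, with the crucial point being exactly the one you isolate, namely that with the ordering $v_1>_v\cdots>_v v_d$ the vertices pruned below the child $v_i$ of the root (those $v_j$ with $v_j>_v v_i$, i.e.\ $j<i$) coincide with the vertices fixed to $0$ by $\sigma^{(i)}$, and a vertex fixed to $0$ contributes the same trivial factor $1/(1+0)$ as a deleted subtree. The only loose end is degenerate: if some neighbor of $v$ lies in $\Lambda$ and is occupied under $\sigma$, your telescoping product does not apply as written, but then $R^{\sigma}_{G,v}=0$ on both sides, so this case is handled separately in one line.
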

In~\cite{bayati} and~\cite{nair2007correlation}, the self-avoiding walk tree is generalized to deal with general constraints.


\section{Computing the capacity by SSM}
We apply the sequential cavity method of Gamarnik and Katz~\cite{gamarnik2009sequential}, which gives efficient approximation algorithm for computing the entropy of lattice models in statistical mechanics using the strong spatial mixing. In~\cite{marcus}, this was used for approximately computing the entropy of the 2-dimensional Markov random fields exhibiting strong spatial mixing.
These approximation algorithms, along with the one given in~\cite{pavlov} for the hard-square entropy, are all in the framework of \concept{polynomial-time approximation scheme (PTAS)}, which is defined as follows. 


\begin{Def}
We say that there exists a \concept{polynomial-time approximation scheme (PTAS)} for computing a real number $C\in[0,1]$ if for any $\epsilon>0$ a number $\hat{C}$ can be returned in time $\mathrm{poly}(\frac{1}{\epsilon})$ such that $|\hat{C}-C|\le\epsilon$.
\end{Def}




Here we restate the proof in~\cite{gamarnik2009sequential} for the implication from SSM to the existence of PTAS in our context for self-containness.

\begin{Thm}\label{thm:ssm-ptas}
For $\cons\in\{\HS,\HH,\RWIM,\NAK\}$, if the independent sets of $\mathbb{L}^{Cons}$ exhibit SSM, then we have a PTAS for computing $C_\cons$.
\end{Thm}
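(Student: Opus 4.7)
The plan is to apply the sequential cavity method of Gamarnik and Katz~\cite{gamarnik2009sequential}. Order the $N=mn$ vertices of $\lattice^{\cons}_{m,n}$ in row-major order as $v_1,\ldots,v_N$, and for each $i$ set $G_i=\lattice^{\cons}_{m,n}[\{v_i,\ldots,v_N\}]$ and $p_i=\Pr_{\sigma\sim\mu_{G_i}}[\sigma_{v_i}=0]$. Since every independent set of $G_i$ with $v_i$ unoccupied bijects with one of $G_{i+1}$, we have $p_i=Z_{G_{i+1}}/Z_{G_i}$, and telescoping gives
\begin{equation*}
\log_2\IS(\lattice^{\cons}_{m,n}) \;=\; -\sum_{i=1}^{N}\log_2 p_i.
\end{equation*}

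Next I would use SSM and translation invariance to show that the $p_i$'s concentrate at a common value $p^*$ for vertices in the bulk. The uniform measure on IS of $G_i$ coincides with the uniform measure on IS of $\lattice^{\cons}_{m,n}$ conditioned on the removed vertices $v_1,\ldots,v_{i-1}$ being unoccupied, so $p_i$ is the marginal at $v_i$ in $\lattice^{\cons}_{m,n}$ under the boundary condition ``past $= 0$''. Call $v_i$ a \emph{$D$-bulk vertex} if its graph-distance to the rectangle boundary exceeds $D$. For such $v_i$, the fixed past vertices within distance $D$ of $v_i$ form a translation-invariant $L$-shape independent of $i$, and by SSM on $\lattice^{\cons}$ the marginal $p_i$ depends on the conditioning beyond distance $D$ only up to error $\beta e^{-\gamma D}$. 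Hence all $D$-bulk $p_i$'s lie within $\beta e^{-\gamma D}$ of a single limit $p^*\in[1/2,1]$ (the ``corner'' marginal of the infinite quadrant with $L$-shape past fixed; the lower bound follows because every IS with $v_i$ occupied pairs with one where $v_i$ is unoccupied). At most $O((m+n)D)$ vertices fail to be $D$-bulk and each $-\log_2 p_i\in[0,1]$, so choosing $D=\Theta(\log(1/\epsilon))$ and dividing by $mn$ as $m,n\to\infty$ yields
\[
C_{\cons}\;=\;\lim_{m,n\to\infty}\frac{1}{mn}\log_2\IS(\lattice^{\cons}_{m,n})\;=\;-\log_2 p^*.
\]

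It remains to approximate $p^*$ within additive error $\epsilon/\ln 2$ in time $\mathrm{poly}(1/\epsilon)$. Build the SAW tree rooted at a canonical reference corner of $\lattice^{\cons}$ and truncate it at depth $D=\lceil\gamma^{-1}\log(2\beta/\epsilon)\rceil$, fixing the $L$-shape past together with all depth-$D$ leaves to be unoccupied, then compute the root marginal $\hat{p}$ bottom-up via the ratio recursion~\eqref{eq-ratio-recursion}. Fixing vertices to unoccupied is equivalent to pruning (Proposition~\ref{prop:1}), so $\hat{p}$ equals the root marginal of a subtree of the ambient SAW tree $T_{\saw}(\lattice^{\cons},v^*)$, and SSM on $\lattice^{\cons}$ gives $|\hat{p}-p^*|\le\epsilon/2$. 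Since $\lattice^{\cons}$ has maximum degree at most $8$, the truncated tree has at most $8^D=\mathrm{poly}(1/\epsilon)$ nodes, so the procedure runs in $\mathrm{poly}(1/\epsilon)$ time; returning $-\log_2\hat{p}$ yields a PTAS for $C_{\cons}$.

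The main technical obstacle is rigorously composing (i) Theorem~\ref{thm-saw} (which reduces finite marginals on $G_i$ to root marginals of finite SAW trees), (ii) the SSM hypothesis stated on the ambient infinite lattice $\lattice^{\cons}$, and (iii) translation invariance (which identifies local structure across bulk vertices), so that a single pair of decay constants $\beta,\gamma$ controls both the bulk-convergence error in the cavity sum and the truncation error in the algorithmic step. The unifying observation is that every truncated SAW tree arising in the proof is---by Proposition~\ref{prop:1}---a subtree of $T_{\saw}(\lattice^{\cons},v^*)$ obtained by fixing the appropriate past and depth-$D$ vertices to be unoccupied, so SSM for $\lattice^{\cons}$ applies verbatim.
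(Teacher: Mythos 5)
Your sequential-cavity decomposition and the SSM-based concentration of the bulk conditional marginals around a single value $p^*$ is exactly the paper's Lemma~\ref{lemma:ssm-ptas}: your $p^*$ is their $p_t$, your $L$-shaped ``past'' is their $Q_{i,j}$, and the $O((m+n)D)$ boundary terms are discarded identically. The gap is in your algorithmic step. You compute $\hat p$ from the SAW tree truncated at tree-depth $D$ with the depth-$D$ leaves fixed unoccupied, and assert that ``SSM on $\lattice^{\cons}$ gives $|\hat p - p^*|\le \epsilon/2$.'' But the theorem's hypothesis is spatial mixing on the \emph{lattice}, measured in lattice distance, whereas the truncation error of the SAW tree is governed by correlation decay on the \emph{tree}, measured in tree depth. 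These do not coincide: a self-avoiding walk of length $D$ can terminate at lattice distance $O(1)$ from its starting point, so fixing the entire depth-$D$ level of $T_{\saw}(\lattice^{\cons},v)$ is not the image of any lattice boundary condition at distance $\Omega(D)$, and lattice SSM says nothing about it. Decay along the SAW tree is a strictly stronger property --- the whole point of Theorem~\ref{thm-NAK-no-SSM} is that it can fail (for NAK) even though lattice SSM is not ruled out there, and the branching-matrix machinery of Lemma~\ref{lemma-bm-ssm} exists precisely to establish it for HH and RWIM. As written, your argument proves a weaker statement whose hypothesis is SSM along self-avoiding walks, not the stated one.

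The fix, and what the paper actually does, is to avoid the SAW tree entirely in the algorithmic step: $p_t$ is the marginal at the center of the finite $(2t+1)\times(2t+1)$ box with the $L$-shaped past fixed unoccupied, and this box has treewidth $O(t)$, so $p_t$ is computed \emph{exactly} by the dynamic programming of~\cite{yin} in time $2^{O(t)}\cdot\mathrm{poly}(t)=\mathrm{poly}(1/\epsilon)$. Lattice SSM is then needed only in Lemma~\ref{lemma:ssm-ptas}, to certify that this exactly computed finite quantity is within $O(\epsilon)$ of $C_{\cons}$. Replacing your truncated-SAW-tree evaluation with this exact computation on the box makes your argument go through.
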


We define some notations. Consider the vertex set $[2t+1]\times [2t+1]$ of the finite lattice $\lattice_{2t+1,2t+1}^\cons$. Let $Q= \{(k,\ell): k \in [t-1], \ell \in [2t+1]\} \cup \{(k,\ell): k = t, \ell \in [t-1]\}$ denote the first $t-1$ rows and the first $t-1$ entries of the $t$-th row in $[2t+1]\times [2t+1]$, and let $\textbf{0}_{Q}\in\{0,1\}^{Q}$ be the configuration fixing all vertices in $Q$ to be unoccupied.
Consider uniformly distributed independent set $\sigma$ of $\lattice_{2t+1,2t+1}^\cons$. Let $p_t = \Pr[\sigma_{t,t}=0|\textbf{0}_{Q}]$ be the marginal probability of central point $(t,t)$ being unoccupied conditioned on that all vertices in $Q$ being fixed to be unoccupied.

\begin{Lem}\label{lemma:ssm-ptas}
If the independent sets of $\lattice^{\cons}$ exhibit SSM, then for any constant $0<\epsilon<1$, there is a $t=O(\log\frac{1}{\epsilon})$ such that
\[
\left|\log_2\frac{1}{p_t}-C_\cons\right|\le\epsilon.
\]
\end{Lem}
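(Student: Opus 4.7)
The strategy is the \emph{sequential cavity method} of Gamarnik--Katz. Enumerate the $mn$ vertices of $\lattice^{\cons}_{m,n}$ in lexicographic order as $v_1, v_2, \ldots, v_{mn}$, let $P(v_k) = \{v_1, \ldots, v_{k-1}\}$, and define $q_k^{(m,n)} = p_{\lattice^\cons_{m,n}, v_k}^{\mathbf{0}_{P(v_k)}}$. Since the all-zero configuration is always an independent set, $\Pr_{\mu_{\lattice^\cons_{m,n}}}[\sigma \equiv \mathbf{0}] = 1/\IS(\lattice^\cons_{m,n})$, and the chain rule yields
\[
\log_2 \IS(\lattice^\cons_{m,n}) \;=\; -\log_2 \Pr\nolimits_{\mu_{\lattice^\cons_{m,n}}}[\sigma \equiv \mathbf{0}] \;=\; \sum_{k=1}^{mn} \log_2 \frac{1}{q_k^{(m,n)}}.
\]
Call $v_k$ \emph{interior} if its lattice distance to the boundary of $\lattice^\cons_{m,n}$ exceeds $t$. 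The heart of the argument is to show that for every interior $v_k$,
\[
\left|q_k^{(m,n)} - p_t\right| \;\le\; 2\beta e^{-\gamma t},
\]
where $\beta,\gamma>0$ are the SSM constants for $\lattice^\cons$.

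For this interior estimate I would bridge the two quantities by the auxiliary marginal $\tilde q_k = p_{\lattice^\cons_{m,n}, v_k}^{\mathbf{0}_{Q(v_k)}}$, where $Q(v_k)$ is the translate of $Q$ placing $(t,t)$ at $v_k$; observe that $Q(v_k) = P(v_k) \cap B$ for $B$ the $(2t+1)\times(2t+1)$ block containing $v_k$. The first bound, $|q_k^{(m,n)} - \tilde q_k| \le \beta e^{-\gamma t}$, would follow by expanding $\tilde q_k$ as an average of $p_{\lattice^\cons_{m,n}, v_k}^{(\mathbf{0}_{Q(v_k)},\rho)}$ over configurations $\rho$ on $P(v_k) \setminus Q(v_k)$ drawn from the conditional distribution given $\mathbf{0}_{Q(v_k)}$; each such completion agrees with $\mathbf{0}_{P(v_k)}$ within distance $t$ of $v_k$ (as $P(v_k) \setminus Q(v_k)$ lies outside $B$), so SSM bounds every summand's deviation from $q_k^{(m,n)}$ by $\beta e^{-\gamma t}$. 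The second bound, $|\tilde q_k - p_t| \le \beta e^{-\gamma t}$, would use the hard-core identity that \emph{fixing} a vertex to be unoccupied is equivalent to \emph{deleting} it, and hence $p_t = p_{\lattice^\cons_{m,n}, v_k}^{\mathbf{0}_{Q(v_k) \cup (V \setminus B)}}$ after translation; averaging this over the values on $V \setminus B$ given $\mathbf{0}_{Q(v_k)}$ and invoking SSM (the difference set is contained in $V \setminus B$, hence at distance $>t$ from $v_k$) gives the bound. Formally, each SSM application is taken with $U$ equal to the component of $v_k$ in $V \setminus \Lambda$ for the relevant conditioning set $\Lambda$; the Markov property reduces the conditioning on $\Lambda$ to one on $\partial U \subseteq \Lambda$, to which the paper's definition of SSM directly applies.

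To finish, note that $q_k^{(m,n)} \ge \tfrac12$ uniformly over conditionings, since extensions with $\sigma_{v_k} = 1$ force the neighbors of $v_k$ to be unoccupied and therefore map injectively into extensions with $\sigma_{v_k} = 0$. Consequently $\log_2(1/\cdot)$ is $O(1)$-Lipschitz near $p_t \ge \tfrac12$, and the pointwise estimate upgrades to $|\log_2(1/q_k^{(m,n)}) - \log_2(1/p_t)| \le \beta' e^{-\gamma t}$ for every interior $v_k$. The number of non-interior vertices is $O((m+n)t)$ and each summand lies in $[0,1]$, so substituting into the cavity identity and dividing by $mn$ gives
\[
\left|\frac{\log_2 \IS(\lattice^\cons_{m,n})}{mn} - \log_2 \frac{1}{p_t}\right| \;\le\; \beta' e^{-\gamma t} + O\!\left(\frac{(m+n)t}{mn}\right).
\]
Letting $m,n \to \infty$ with $t$ fixed yields $|C_\cons - \log_2(1/p_t)| \le \beta' e^{-\gamma t}$, and choosing $t = \lceil \gamma^{-1} \ln(\beta'/\epsilon)\rceil = O(\log(1/\epsilon))$ gives the desired bound.

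The main obstacle is the interior estimate $|q_k^{(m,n)} - p_t|$. SSM as stated compares two configurations on a \emph{common} conditioning set, whereas here we must replace a large conditioning set (all of $P(v_k)$ in the full grid) by a small one (only $Q$ in a $(2t+1)\times(2t+1)$ window). Overcoming this requires the averaging trick above to equate conditioning set sizes, together with the hard-core-specific observation that fixing to $0$ equals deletion, which allows $p_t$ (a marginal in the small window) to be recast as a conditional marginal in the big graph so that SSM can be applied a second time.
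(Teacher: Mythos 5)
Your proof is correct and follows essentially the same route as the paper: the sequential cavity (chain-rule) decomposition of $\log_2 \IS(\lattice^\cons_{m,n})$, SSM plus the averaging trick to replace the lexicographic-predecessor conditioning by the local window quantity $p_t$ at interior vertices, and the $q_k\ge\tfrac12$ bound to control the boundary and Lipschitz steps. The only cosmetic difference is that you pass through the intermediate marginal conditioned on $\mathbf{0}_{Q(v_k)}$ alone (two SSM applications), whereas the paper conditions additionally on the unoccupied window boundary and observes that the result equals $p_t$ exactly (one SSM application); both are valid.
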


\begin{proof}

Let $n\gg t$ be sufficiently large. We use the finite lattice $\lattice_{n,n}^\cons$ on vertex set $[n]\times[n]$ to connect the marginal probability $p_t$ in a constant size ($t=O(\log\frac{1}{\epsilon})$) instance to the capacity $C_\cons$ defined on the infinite lattice $\lattice^\cons$.

For each $i,j\in[n]$, let $S_{i,j}=\{(k,\ell)\in[n]\times [n]:|k - i| \le t, |l - j| \le t\}$ denote the $(2t+1) \times (2t+1)$ square centered at $(i,j)$ (truncated if it goes beyond the boundary of $[n]\times[n]$. Let $S'$ denote the set of those vertices $(i,j)\in[n]\times[n]$ whose $S_{i,j}$ are not truncated by the boundary of $\lattice_{n,n}^\cons$, that is, $S' = \{(i,j)\in[n]\times[n]: t < i,j \le n-t\}$.

For each $i,j\in[n]$, let $Q_{i,j}= \{(k,\ell)\in[n]\times [n]: k \in [i-1], \ell \in [n]\} \cup \{(k,\ell)\in[n]\times [n]: k = i, \ell \in [j-1]\}$ denote the first $i-1$ rows and the first $j-1$ entries of the $i$-th row in $[n]\times [n]$, and let $\textbf{0}_{i,j}\in\{0,1\}^{Q_{i,j}}$ be the configuration fixing all vertices in $Q_{i,j}$ to be unoccupied.
Consider uniformly distributed independent set $\sigma$ of $\lattice_{n,n}^\cons$. For each $i,j\in[n]$, let $p_{i,j} = \Pr[\sigma_{i,j}=0|\textbf{0}_{i,j}]$ be the marginal probability of vertex $(i,j)$ being unoccupied conditioned on that all vertices in $Q_{i,j}$ being fixed to unoccupied.

Let $\mathbf{0}$ be the configuration fixing all vertices in $[n]\times [n]$ to be unoccupied. For a uniformly distributed independent set $\sigma$ of $\lattice^\cons_{n,n}$, we have $\Pr[\sigma=\mathbf{0}]=1/N^\cons_{n,n}$ where $N^\cons_{n,n}$ denotes the number of independent sets of $\lattice^\cons_{n,n}$. Moreover, it holds that
\begin{align*}
\Pr[\sigma=\mathbf{0}]
&=\prod_{i,j\in[n]}\Pr[\sigma_{i,j}=0\mid \mathbf{0}_{i,j}]
=\prod_{i,j\in[n]}p_{i,j}.
\end{align*}
Therefore, we have
\begin{align*}
\frac{\log_2 N^\cons_{n,n}}{n^2}=\frac{1}{n^2}\sum_{i,j\in[n]}\log_2\frac{1}{p_{i,j}}.
\end{align*}
Let $p_{i,j}'=\Pr[\sigma_{i,j}=0|\textbf{0}_{i,j}\wedge \textbf{0}_{B}]$ where $\textbf{0}_{B}$ is the configuration fixing all boundary vertices of $S_{i,j}$ in $\lattice_{n,n}^\cons$ to be unoccupied.
Note that for $(i,j) \in S'$ and any considered constraint $\cons$, the distance in $\lattice^{cons}$ is at least the grid distance distorted by a constant factor, thus the shortest distance between vertex $(i,j)\in S'$ and the boundary is $\Omega(t)$.
Suppose that the independent sets of $\lattice^\cons_{n,n}$ exhibits SSM. Then for some $t=O(\log\frac{1}{\epsilon})$, we have $|p_{i,j}-p_{i,j}'|\le \epsilon$ for all $(i,j)\in S'$.

Furthermore, for those $(i,j)\in S'$, it is easy to see that $p'_{i,j}=p_t$. And for $(i,j)\in \lattice_{n,n}^\cons \setminus S'$, it holds that $p_{i,j}\in[\frac{1}{2},1]$. Therefore, we have
\begin{align*}
\frac{(n-2t)^2 \log_2\frac{1}{p_t+\epsilon}}{n^2}
\le
\frac{1}{n^2}\sum_{i,j\in[n]}\log_2\frac{1}{p_{i,j}}\le
\frac{(n-2t)^2 \log_2\frac{1}{p_t-\epsilon}+(4nt-t^2)}{n^2}.
\end{align*}
when $n\rightarrow\infty$, we have $C_\cons=\frac{\log_2 N^\cons_{n,n}}{n^2}=\frac{1}{n^2}\sum_{i,j\in[n]}\log_2\frac{1}{p_{i,j}}$ and
\[
\log_2\frac{1}{p_t+\epsilon}
\le
C_\cons
\le
\log_2\frac{1}{p_t-\epsilon}.
\]
Therefore,
\begin{align*}
\left|\log_2\frac{1}{p_t}-C_\cons\right|
\le\log_2\frac{1}{p_t-\epsilon}-\log_2\frac{1}{p_t+\epsilon}
=\log_2\left(1+\frac{2\epsilon}{p_t-\epsilon}\right)
\le\log_2\left(1+\frac{2\epsilon}{1/2-\epsilon}\right)
=O(\epsilon).
\end{align*}
\end{proof}

The exact value of $p_t$ can be relatively efficiently computed because the graph on which $p_t$ is defined has bounded treewidth. Precisely, for all considered constraints $\cons$, the treewidth of the finite graph $\lattice_{2t+1,2t+1}^\cons$ is $O(t)$. And the independent set is covered by the framework considered in~\cite{yin}. The value of $p_t$ can thus be computed exactly by the dynamic programming algorithm introduced in~\cite{yin} with time complexity $2^{O(t)} \cdot poly(t) = \mathrm{poly}(\frac{1}{\epsilon})$.
Combined with Lemma~\ref{lemma:ssm-ptas}, Theorem~\ref{thm:ssm-ptas} is proved.

\section{SSM of hard-hexagon and RWIM}
It is well known that the independent sets of two-dimensional grid $\lattice^\HS$ exhibits SSM~\cite{weitz}.
We now prove the following theorem.
\begin{Thm}\label{thm-ssm-HH-RWIM}
The independent sets of $\lattice^\HH$ and $\lattice^\RWIM$ exhibits SSM.
\end{Thm}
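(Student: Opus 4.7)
By Weitz's theorem (Theorem~\ref{thm-saw}), for any finite induced subgraph of $\lattice^\cons$ and any vertex $v$, the marginal at $v$ under any boundary condition equals the root marginal of the self-avoiding walk tree $T_{\saw}(\lattice^\cons,v)$ under the induced boundary. Consequently, to establish SSM for $\lattice^\cons$ with $\cons\in\{\HH,\RWIM\}$ it suffices to prove a uniform, boundary-free exponential correlation decay at the root over the family of all such SAW trees.

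The plan is to dominate each SAW tree by a larger, structurally simpler tree generated by a \emph{branching matrix}. For each $\cons\in\{\HH,\RWIM\}$ I would partition the directed edges of $\lattice^\cons$ into a small number of types according to edge direction (three for $\HH$: horizontal, vertical, anti-diagonal; three for $\RWIM$ analogously). When a SAW arrives at a vertex $u$ along an edge of type $i$, self-avoidance plus the local geometry of $\lattice^\cons$ forbids certain outgoing types at $u$, because the excluded continuations would hit a lattice-neighbor of the parent that already lies on the walk. Counting the worst-case multiplicities of allowed (parent type, child type) pairs yields a nonnegative matrix $M_\cons$. The tree $T_{M_\cons}$ generated by iterating $M_\cons$ at every node is then a supertree of $T_{\saw}(\lattice^\cons,v)$ for every $v$, in the sense that every such SAW tree embeds in $T_{M_\cons}$ after forgetting type labels, so that by Proposition~\ref{prop:1} SSM on $T_{M_\cons}$ will imply SSM on every SAW tree.

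The central step is establishing SSM on $T_{M_\cons}$, where I would combine the potential-function technique of~\cite{li2} with the type-based branching-matrix analysis of~\cite{restrepo} and the connective-constant framework of~\cite{sinclair}. One transforms the ratio recursion \eqref{eq-ratio-recursion} through a carefully chosen potential $\Phi$ and extracts a per-transition amortized contraction factor $\alpha_{i\to j}$; standard telescoping over paths from root to boundary then yields SSM on $T_{M_\cons}$ provided the spectral radius of the weighted matrix $\widetilde{M}_\cons$ with entries $\alpha_{i\to j}\,(M_\cons)_{i,j}$ is strictly less than $1$. Composing this with Proposition~\ref{prop:1} and Theorem~\ref{thm-saw} gives SSM on $\lattice^\cons$ itself.

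The main obstacle is the quantitative verification $\rho(\widetilde{M}_\cons)<1$. Both $\lattice^\HH$ and $\lattice^\RWIM$ have degree $6$, which exceeds the degree-$5$ uniqueness threshold for hard core at $\lambda=1$ on regular trees, so there is no slack from maximum degree alone and all savings must come from the self-avoidance pruning captured by $M_\cons$. I expect a crude one-step type classification to be insufficient; if so, the natural remedy is to refine types by looking two or more SAW steps back (equivalently, analyzing a power $M_\cons^k$) and to tune $\Phi$ so that the worst-case amortized contraction is as small as possible. The final step is then verifying the resulting finite-dimensional numerical inequality, quite possibly with computer assistance.
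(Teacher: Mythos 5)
Your proposal follows essentially the same route as the paper: reduce to the SAW tree via Theorem~\ref{thm-saw}, dominate it by a multi-type branching-matrix supertree, prove contraction of the ratio recursion through the $\sinh^{-1}(\sqrt{x})$-type potential of~\cite{li2} in the $\ell_2^2$ framework of~\cite{sinclair}, and verify a spectral-radius condition numerically after refining the types (the paper refines by recording the longest suffix of the walk lying on a short cycle and by exploiting the order-based truncation in the SAW tree, needing cycles up to length $8$ for RWIM). The paper's Lemma~\ref{lemma-bm-ssm} is exactly your weighted-matrix criterion with a uniform per-step contraction factor $1/\gamma$, $\gamma\approx 4.047$ for degree $6$, so the approaches coincide.
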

It is well known that $C_\HH$ is exact solvable~\cite{baxter}. Applying Theorem~\ref{thm:ssm-ptas}, we have the following new algorithmic result for $C_\RWIM$. 
\begin{Cor}
There exists PTAS for computing $C_\HH$ and $C_\RWIM$.
\end{Cor}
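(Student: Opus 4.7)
The corollary is essentially a bookkeeping step: it combines the SSM conclusion of Theorem~\ref{thm-ssm-HH-RWIM} with the SSM-implies-PTAS reduction of Theorem~\ref{thm:ssm-ptas}. So the plan is to invoke Theorem~\ref{thm-ssm-HH-RWIM} once for $\cons = \HH$ and once for $\cons = \RWIM$, obtaining in each case the SSM constants $\beta,\gamma>0$ governing the correlation decay on $\lattice^\cons$, and then immediately apply Theorem~\ref{thm:ssm-ptas} to that same $\cons$. Both $\HH$ and $\RWIM$ belong to the admissible list $\{\HS,\HH,\RWIM,\NAK\}$ that is hardcoded into the statement of Theorem~\ref{thm:ssm-ptas}, so no re-derivation of the reduction is needed; the hypothesis of Theorem~\ref{thm:ssm-ptas} is literally supplied by the conclusion of Theorem~\ref{thm-ssm-HH-RWIM}.

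Unpacking what this chaining yields algorithmically is the only content to write out: given target additive error $\epsilon>0$, one picks $t = O(\log(1/\epsilon))$ as prescribed by Lemma~\ref{lemma:ssm-ptas}, forms the finite lattice $\lattice^\cons_{2t+1,2t+1}$, and computes the conditional marginal $p_t=\Pr[\sigma_{t,t}=0\mid\mathbf{0}_Q]$ exactly via the bounded-treewidth dynamic programming of~\cite{yin}, whose running time is $2^{O(t)}\cdot\mathrm{poly}(t)=\mathrm{poly}(1/\epsilon)$ because $\lattice^\cons_{2t+1,2t+1}$ has treewidth $O(t)$ for every considered $\cons$. Returning $\hat{C} = \log_2(1/p_t)$ then satisfies $|\hat{C} - C_\cons|\le\epsilon$ by Lemma~\ref{lemma:ssm-ptas}, matching the PTAS definition verbatim.

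There is no genuine obstacle at this level; all of the heavy lifting sits in Theorem~\ref{thm-ssm-HH-RWIM} (to establish SSM) and in Theorem~\ref{thm:ssm-ptas} together with Lemma~\ref{lemma:ssm-ptas} (to convert SSM into the explicit approximation algorithm). The only things one must verify, and they are immediate, are that $\HH$ and $\RWIM$ fall under the scope of both earlier results. It is worth noting in closing that for $C_\HH$ the corollary recovers efficient computability already implied by Baxter's exact solution~\cite{baxter}, while for $C_\RWIM$ it gives the first algorithm of provable polynomial-time approximability.
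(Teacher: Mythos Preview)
Your proposal is correct and matches the paper's approach exactly: the corollary follows immediately by applying Theorem~\ref{thm:ssm-ptas} to the SSM result of Theorem~\ref{thm-ssm-HH-RWIM}, and the paper says nothing more than this. Your additional unpacking of the algorithm via Lemma~\ref{lemma:ssm-ptas} and the bounded-treewidth computation is faithful elaboration of what is already inside the proof of Theorem~\ref{thm:ssm-ptas}.
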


\subsection{Branching matrix}
The SSM is proved on a supertree of the self-avoiding walk tree for the respective lattice. This supertree is a multi-type Galton-Watson tree generated by a branching matrix whose definition is introduced in~\cite{restrepo}.

\begin{Def}[Restrepo \emph{et al.}~\cite{restrepo}]
A branching matrix $M$ is an $m \times m$ matrix of nonnegative integral entries. Each branching matrix represents a rooted tree $T_M$ generated by the following rules:
\begin{itemize}
\item each node of $T_M$ is in one of the $m$ types $\{1,\ldots,m\}$, with the root being type 1;
\item every type-$i$ node has exactly $M_{ij}$ many children of type $j$.
\end{itemize}
\end{Def}

Note that if $M$ is irreducible, then $T_M$ is an infinite tree. The maximum arity of the tree is $d = \max_{1 \le i \le m} \sum_{j = 1}^m M_{ij}$. The following lemma gives a relation between SSM and  the maximum eigenvalue of branching matrix.

The SSM is proved by the following main lemma, which can be implied by Theorem~1.3 of~\cite{sinclair} through the notion of \emph{connective constant}. Here we restate the proof to the lemma without using connective constant. The following lemma relates the SSM to the maximum eigenvalue of branching matrix.
\begin{Lem}[implicit in~\cite{sinclair}]\label{lemma-bm-ssm}
Let $G(V,E)$ be an infinite graph with maximum degree $d+1$. Let $\gamma=\inf_{x \in [0,+\infty)}\frac{[1+(1+x)^d](1+x)}{dx}$.
If for every $v\in V$, there exists a branching matrix $M$ satisfying:
\begin{enumerate}
\item the tree $T_M$ generated by $M$ is a supertree of $T_{\saw}(G,v)$, and
\item the largest eigenvalue $\lambda^*$ of $M$ is less than $\gamma$,
\end{enumerate}
then the independent sets of $G$ exhibits SSM.
\end{Lem}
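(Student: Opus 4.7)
The plan is to reduce SSM on $G$ to SSM on the branching tree $T_M$ via Weitz's theorem, and then establish the latter by a potential-function contraction calibrated against the spectral radius $\lambda^*$ of $M$.

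\textbf{Reduction to $T_M$.} Fix $v \in V$ and two boundary configurations $\sigma, \tau$ on $\partial U$ differing only on $\Delta$. By Theorem \ref{thm-saw}, the marginal at $v$ in $G$ equals the marginal at the root of $T_{\saw}(G,v)$ under the corresponding boundary, so the SSM bound on $G$ reduces to the same bound on $T_{\saw}(G,v)$. Since $T_M$ is a supertree of $T_{\saw}(G,v)$, by Proposition \ref{prop:1} it suffices to establish SSM on $T_M$: fixing the extra vertices of $T_M$ (those outside the image of $T_{\saw}(G,v)$) to be unoccupied prunes $T_M$ down to $T_{\saw}(G,v)$ while preserving the distance to $\Delta$ and the SSM bound.

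\textbf{Potential-function contraction on $T_M$.} On the tree I work with the ratios $R_u$ obeying \eqref{eq-ratio-recursion} and their images $y_u = \Phi(R_u)$ under a potential $\Phi$ (following \cite{li2}; the choice $\Phi'(R) = 1/\sqrt{R(1+R)}$ makes the per-edge amplification factor $\kappa_i := \partial y_v/\partial y_{u_i}$ equal $\sqrt{R_v R_{u_i}/((1+R_v)(1+R_{u_i}))}$, which happily factors as a function of $R_v$ times a function of $R_{u_i}$). Applying Cauchy--Schwarz to the linearized recursion yields
\begin{equation*}
|y_v - y_v'|^2 \leq \Bigl(\sum_{i=1}^{d_v} \kappa_i^2\Bigr) \sum_{i=1}^{d_v} |y_{u_i} - y_{u_i}'|^2.
\end{equation*}
Iterating this inequality $L$ times bounds the root discrepancy by $(\sup_v \sum_i \kappa_i^2)^L$ times the sum over depth-$L$ vertices of their squared discrepancies, which is $O(1)$ per vertex and is $\Theta((\lambda^*)^L)$ in total by Perron--Frobenius applied to $M$. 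The resulting contraction rate is $((\lambda^*)\cdot\sup_v \sum_i \kappa_i^2)^{L/2}$, so the argument closes once one verifies that $\sup_v \sum_i \kappa_i^2 \leq 1/\gamma$. A scalar calculus computation shows that the worst case over admissible $(R_{u_i})_i$ with $d_v \leq d$ is attained at the symmetric configuration $R_{u_i} = x$ for a common $x$, and that optimizing over $x$ produces precisely the expression $\gamma = \inf_{x \geq 0}[1+(1+x)^d](1+x)/(dx)$. Under the hypothesis $\lambda^* < \gamma$ one then obtains exponential decay with rate $\frac{1}{2}\log(\gamma/\lambda^*) > 0$ in the distance from the root to $\Delta$, which is SSM for $T_M$.

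\textbf{Main obstacle.} The technical crux is the amortized bound $\sup_v \sum_i \kappa_i^2 \leq 1/\gamma$ under the designated potential. One has to (i) identify the right $\Phi$ so that $\kappa_i^2$ factors cleanly and the worst case lies on the symmetric diagonal, (ii) verify via convexity (or Lagrange multipliers) that no asymmetric choice of boundary-fixed ratios $(R_{u_i})$ beats the symmetric one, and (iii) execute the scalar optimization to match the resulting infimum with $\gamma$. A secondary subtlety is the Perron--Frobenius estimate $\|M^L\| = \Theta((\lambda^*)^L)$: the implied constants must be uniform in the root type of $T_M$, which is automatic when $M$ is irreducible and otherwise requires decomposing $M$ into irreducible blocks and extracting the dominant one. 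Once these pieces are assembled, the lemma follows.
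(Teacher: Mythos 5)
Your proposal is correct and follows essentially the same route as the paper: reduce to $T_M$ via Weitz's theorem and Proposition~\ref{prop:1}, use the potential $\phi=\sinh^{-1}(\sqrt{\cdot})$ so that the per-edge factor squared is $\frac{R_v}{1+R_v}\cdot\frac{R_{u_i}}{1+R_{u_i}}$, apply Cauchy--Schwarz in $\ell_2^2$, bound the amplification by $1/\gamma$, and invoke the spectral radius of $M$ (the paper tracks a per-type error vector $\epsilon^{(t)}\le\frac{1}{\gamma}M\epsilon^{(t-1)}$ rather than counting leaves, but this is cosmetic). The one step you defer---that the symmetric configuration is worst---is exactly the paper's Jensen argument with $\bar{x}=\bigl(\prod_j(1+\tilde{x}_j)^{M_{ij}}\bigr)^{1/d_i}-1$, so your outline closes as claimed.
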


\subsection{Potential analysis of correlation decay}

By Theorem~\ref{thm-saw}, the SSM on graph $G$ can be implied by the SSM on its SAW tree, which can be implied by the SSM on any supertree of the SAW tree, according to Proposition~\ref{prop:1}. We then verify the SSM for the supertree $T_M$ of the SAW tree generated by a branching matrix $M$ whose eigenvalues satisfy the condition of Lemma~\ref{lemma-bm-ssm}. This can be done by showing that the system~\eqref{eq-ratio-recursion} on the tree $T_M$ converges at an exponential rate while the boundary conditions are arbitrarily fixed. Two key ideas for the analysis are:
\begin{itemize}
\item Instead of analyzing the convergence of the ratios $R = \frac{1-p}{p}$ of the marginal probabilities as straightforwardly in the recursion~\eqref{eq-ratio-recursion}, we analyze the convergence of the ``potentials" $\phi = \sinh^{-1}(\sqrt{R})$. This potential function was introduced in~\cite{li2} and later used in~\cite{sinclair}, seeming to capture the very nature of hard (independent set) constraint.
\item We use the $l_2^2$-norm (sum of squares) to measure the errors of potentials for vertices of different types. The same scheme was proposed in~\cite{sinclair}.
\end{itemize}

Without loss of generality, we consider $m\times m$ branching matrix $M$ of Boolean (0 and 1) entries. For branching matrices with entries greater than 1, we can refine each of the involved types to a number of types so that the resulting branching matrix has Boolean entries, generates the same tree as before, and has the same largest eigenvalue.
Let $T_M$ be the infinite tree generated by $M$.
A set $S$ of vertices in $T_M$ is a cutset if any infinite path in $T_M$ intersects with $S$.
Let $\sigma$ denote an independent set of cutset $S$. For any vertex $v$ in $T_M$, let $T_v$ be the subtree rooted by $v$ and $x_v$ the ratio between probabilities of $v$ being occupied and unoccupied in $T_v$ conditioned on vertices in $S$ being fixed as $\sigma$. Suppose $v$ is of type $i$. By~\eqref{eq-ratio-recursion}, we have the following recursion:
\begin{equation}
x_v= \label{branching-recursion}
\begin{cases}
\frac{1}{\prod_{j = 1}^m (1 + x_j)^{M_{ij}}} & \text{ if }v \notin S, \\
0 & \text{ if } v \in S \text{ and }\sigma_v = 0, \\
\infty & \text{ if } v \in S \text{ and }\sigma_v = 1.
\end{cases}
\end{equation}
where $x_j$ is the corresponding ratio of probabilities at the child of type $j$ in its subtree. Note that $M_{ij}\in\{0,1\}$ and the function depends only on those $x_j$ with $M_{ij}\neq 0$, thus the recursion is well-defined. Let $\eta$ be another independent set of cutset $S$. We can define a new sequence of quantities $x_v'$ in the same way.

Consider any finite graph $G$ and a vertex $v$, and any two independent sets $\rho,\eta$ of vertex subset $\Lambda$ where $\rho,\eta$ disagree only at vertices at least $\ell$ far away from $v$. Let $T=T_\saw(G,v)$ and $T_M$ a supertree of $T$. Then by the same argument as in Proposition~\ref{prop:1}, there must exist two independent sets $\sigma,\tau$ of a cutset $S$ of $T_M$, disagreeing only at vertices at  least $\ell$ far away from the root $v$ of $T_M$, such that $p_{T,v}^\rho=p_{T_M,v}^\sigma$ and $p_{T,v}^\eta=p_{T_M,v}^\tau$. Therefore, by Theorem~\ref{thm-saw} and definitions of the quantities $x_v$ and $x_v'$ as above, we have
\begin{align*}
\left|p_{G,v}^\rho-p_{G,v}^\eta\right|
=
\left|p_{T,v}^\rho-p_{T,v}^\eta\right|
=
\left|p_{T_M,v}^\sigma-p_{T_M,v}^\tau\right|
=
\left|\frac{1}{1+x_v}-\frac{1}{1+x_v'}\right|
\le
|x_v-x_v'|.
\end{align*}
To prove the SSM, it is then sufficient to show it always holds that $|x_v-x_v'|\le\exp(-\Omega(\ell))$ for any independent sets $\sigma,\tau$ in cutset $S$ of $T_M$ which disagree only at vertices $\ell$ far away from the root $v$.

\begin{proof}[Proof of Lemma~\ref{lemma-bm-ssm}]
For each $i\in\{1,2,\ldots,m\}$, we define
\[
f_i(\boldsymbol{x})=\frac{1}{\prod_{j = 1}^m (1 + x_j)^{M_{ij}}},
\]
where $\boldsymbol{x}=(x_1,x_2,\ldots,x_m)$. By~\eqref{branching-recursion}, if the current root $v$ is unfixed and is of type $i$, then we have $x_v=f_i(\boldsymbol{x})$ and $x_v'=f_i(\boldsymbol{x}')$. 
We define the potentials as $y_v = \phi(x_v)$ and 
$y_j = \phi(x_j)$
for all $j=1,2,\ldots, m$, where the potential function $\phi(\cdot)$ is given by
\[
\phi = \sinh^{-1}(\sqrt{x}).
\]
This is the potential function used in~\cite{li2,sinclair,liu2013fptas}. Supposed $v$ is of type $i$, the mapping from $\boldsymbol{y}$ to $y_v$ can be deduced as
\begin{align*}
y_v=g_i(\boldsymbol{y}) =
\phi(f_i(\phi^{-1}(y_1),\phi^{-1}(y_2),\ldots,\phi^{-1}(y_m))).
\end{align*}
Since $\phi$ is strictly monotone, the inverse $\phi^{-1}(y)$ is well-defined.

Suppose $y_v=\phi(x_v)$ and $y_v'=\phi(x_v')$ are the respective potentials at every vertex $v$ defined by the boundary conditions $\sigma,\tau$ on cutset $S$.
For $v\not\in S$, by the mean value theorem, there exist $\tilde{y}_j\in[y_j,y_j']$, and accordingly $\tilde{x}_j=\phi^{-1}(\tilde{y}_j)$, $\forall j=1,2,\ldots,m$, such that
\begin{align*}
|y_v-y_v'|
&=
|g_i(\boldsymbol{y}) - g_i(\boldsymbol{y}')|
\le
\sum_{j=1}^m\left|\frac{\partial g_i}{\partial \tilde{y}_j}\right||y_j-y_j'|
\le
\sqrt{\frac{f_i(\tilde{\boldsymbol{x}})}{1+f_i(\tilde{\boldsymbol{x}})}}\sum_{j=1}^m\sqrt{\frac{M_{ij}\tilde{x}_j}{1+\tilde{x}_j}}\sqrt{M_{ij}}|y_j-y_j'|.
\end{align*}

Due to Cauchy-Schwarz, we have
\begin{align*}
|y_v - y_v'|^2
&\le
\left(\frac{f_i(\tilde{\boldsymbol{x}})}{1+f_i(\tilde{\boldsymbol{x}})}
\sum_{j=1}^m\frac{M_{ij}\tilde{x}_j}{1+\tilde{x}_j}\right)
\sum_{j=1}^mM_{ij}|y_j-y_j'|^2.
\end{align*}
Let $d_i=\sum_{j=1}^mM_{ij}$. Note that $d_i\le d$ where $d+1$ is the maximum degree of the original graph $G$. Recall that $\gamma=\inf_{x \in [0,+\infty)}\frac{[1+(1+x)^d](1+x)}{dx}$. It can be verified that  by such definition $\gamma$ is nondecreasing in $d$.
Let $\bar{x}=\left( \prod_{j=1}^m(1+\tilde{x}_j)^{M_{ij}} \right)^{1/d_i}-1$.
We have $\frac{f_i(\tilde{\boldsymbol{x}})}{1+f_i(\tilde{\boldsymbol{x}})}=\frac{1}{1+(1+\bar{x})^{d_i}}$, and by Jensen's inequality it can be verified that $\sum_{j=1}^m\frac{M_{ij}\tilde{x}_j}{1+\tilde{x}_j}\le \frac{d_i\bar{x}}{1+\bar{x}}$. Therefore, it holds that
\begin{align}
|y_v - y_v'|^2
&\le
\frac{d_i\bar{x}}{(1+(1+\bar{x})^{d_i})(1+\bar{x})}\sum_{j=1}^mM_{ij}|y_j-y_j'|^2
\le \frac{1}{\gamma}\sum_{j=1}^mM_{ij}|y_j-y_j'|^2.
\end{align}

Fix a cutset $S$ and a $\Delta\subseteq S$ such that the shortest distance from any vertex in $\Delta$ to the root of the tree is $\ell$. For each $1\le t\le \ell$ and $i\in\{1,2,\ldots,m\}$, let $\epsilon^{(t)}$ be an $m$-vector such that $\epsilon^{(t)}_i$ is the maximum potential difference square $|y_v-y_v'|^2$ for any vertex $v$ of type $i$ at depth $\ell-t$ (assuming the root has depth 0) defined by any two boundary conditions $\sigma,\tau$ on $S$ disagreeing only on $\Delta$. Suppose that $\epsilon^{(t)}_i=|y_v-y_v'|^2$ for a vertex of type $i$ at level $\ell -t$ for a particular pair of boundary conditions $\sigma,\tau$ on $S$. We have
\[
\epsilon^{(t)}_i
=|y_v - y_v'|^2
\le \frac{1}{\gamma}\sum_{j=1}^mM_{ij}|y_j-y_j'|^2
\le \frac{1}{\gamma}\sum_{j=1}^mM_{ij}\epsilon^{(t-1)}_{j}.
\]
Therefore, for $1\le t\le \ell$, we have the following entry-wise inequality between vectors  $\epsilon^{(t)}$ and $\epsilon^{(t-1)}$:
\[
\epsilon^{(t)}
\le
\frac{1}{\gamma}M\epsilon^{(t-1)}.
\]
It can be verified that every entry of $\epsilon^{(1)}$ is bounded by a sufficiently large constant $C$ since after one step of recursion, both $y_v$ and $y_v'$ must be bounded. Then $\epsilon^{(t)}\le \frac{C}{\gamma^t}M^t\boldsymbol{1}$. Thus if the largest eigenvalue of $M$ is less than $\gamma$,  for every $i\in\{1,2,\ldots,m\}$ it holds that $\epsilon^{(t)}_i=\exp(-\Omega(t))$ for $1<t\le \ell$. In particular for the root $|y_v-y_v'|\le \epsilon^{(\ell)}_i=\exp(-\Omega(\ell))$. Translating this back to the error between $x_v$ and $x_v'$, by the mean value theorem, there exists a $z\in[x_v,x_v']$ such that
\[
|x_v-x_v'|=|\phi^{-1}(y_v)-\phi^{-1}(y_v')| = \frac{1}{\phi'(z)}|y_v-y_v'|= \exp(-\Omega(\ell)),
\]
since the value of $\frac{1}{\phi'(z)}=2\sqrt{z(1+z)}$ is bounded. By the discussion in the beginning of this section, this proves the lemma.
\end{proof}

\subsection{Supertree construction}\label{section-supertree-construction}
For both $\mathbb{L}^{\text{HH}}$ and $\mathbb{L}^{\text{RWIM}}$ the maximum degree is 6, thus we can apply Lemma~\ref{lemma-bm-ssm} with $d=5$, which gives us $\gamma> 4.047$.
Since both $\mathbb{L}^{\text{HH}}$ and $\mathbb{L}^{\text{RWIM}}$ are symmetric for every vertex $v$, for each $\cons\in\{\HH,\RWIM\}$ we only need to construct a branching matrix $M$ satisfying:
\begin{itemize}
\item the largest eigenvalue $\lambda^*(M) < 4.047$;
\item the infinite tree $T_M$ generated by $M$ is a supertree of $T_\saw(\lattice^\cons,v)$ for an arbitrary vertex $v$.
\end{itemize}
By Lemma~\ref{lemma-bm-ssm}, this is sufficient to imply the SSM of independent sets of $\lattice^{\cons}$.


A self-avoiding walk tree $T_\saw(\lattice^\cons,v)$ contains only those walks starting from $v$ avoiding cycles. We relax this constraint and consider a tree $T^\cons_l$ containing all walks in $\lattice^\cons$ starting from $v$ avoiding cycles of length no more than a given constant $l$. Clearly $T^\cons_l$ is a supertree of $T_\saw(\lattice^\cons,v)$. Such $T^\cons_l$ can be generated by a branching matrix $M_l^\cons$ described as follows.




\begin{figure*}
\centering
\begin{minipage}[b]{0.33\textwidth}
\includegraphics[width=1\textwidth]{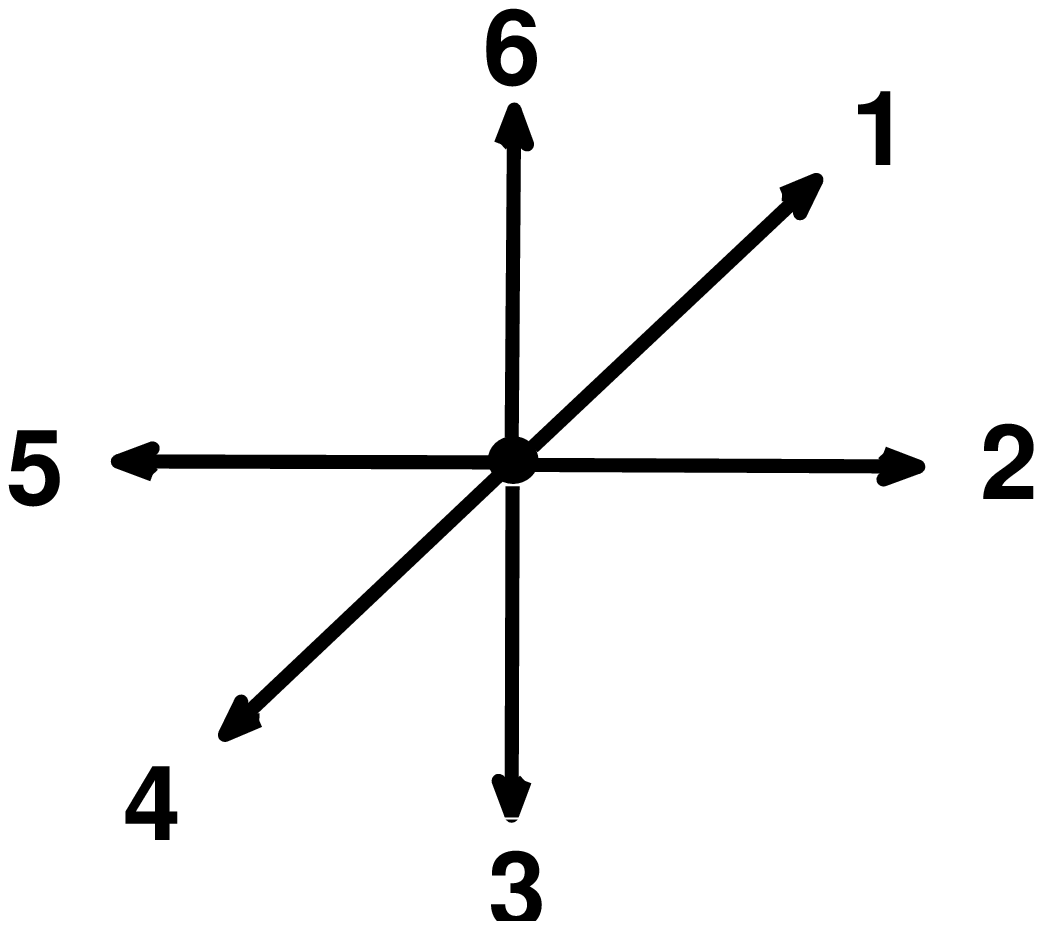}
\end{minipage}
\begin{minipage}[b]{0.33\linewidth}
\includegraphics[width=1\textwidth]{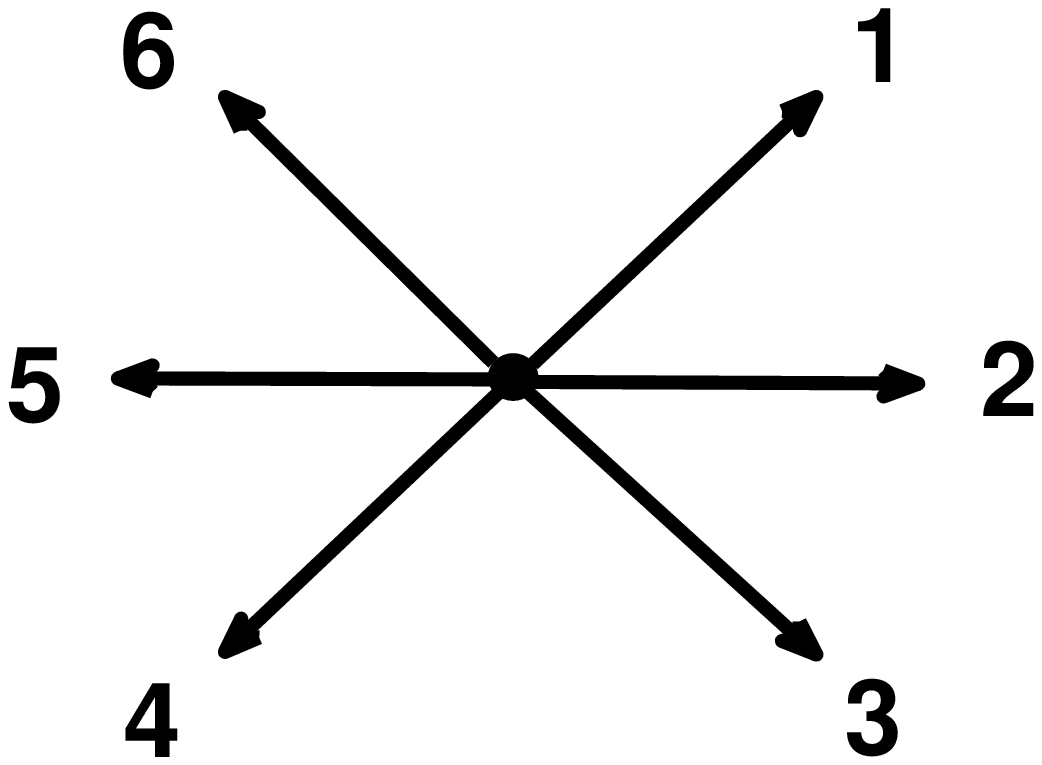}
\end{minipage}
\caption{Order of neighbors for every vertex in $\lattice^\HH$ and $\lattice^\RWIM$}\label{fig:2}
\end{figure*}


Let 
$\mathcal{C}_l$ denote the set of cycles of length no more than $l$ in $\lattice^\cons$. 
For a vertex $v$ in $T^\cons_l$, supposed $p$ to be the walk in $\lattice^\cons$ corresponding to $v$, let $p_k$ denote the last $k$ steps of $p$. If the length of $p$ is less than $k$, then $p_k=p$. The type of $v$ is identified as $p_k$ where $k$ is determined as follows.

\begin{Def}
Let $v$ be a vertex in $T^\cons_l$, whose corresponding walk in $\lattice^\cons$ is $p$. The type of $v$, denoted by $\tau(v)$, is defined as
\begin{equation*}
\tau(v) = p_k, \text{ where } k = \max\left\{i: \exists c \in \mathcal{C}_l, p_i \subseteq c \right\} ,
\end{equation*}
where $p_i \subseteq c$ means that the $i$-step walk $p_i$ equals a sequence of consecutive steps in cycle $c$.
\end{Def}


With this definition of types, we have a branching matrix $M_l^\cons$ generating $T^\cons_l$. We further refine this construction to have supertree $\widetilde{T}_l^\cons$ generated by branching matrix $\widetilde{M}_l^\cons$ which more precisely approximate the $T_\saw(\lattice^\cons,v)$. We fix an order of neighbors for every vertex in $\lattice^\cons$ as demonstrated in Figure~\ref{fig:2}. Supposed that a cycle of length $k+1\le l$ is $w \to v_1 \to \ldots \to v_k \to w$ and $v_1 <_w v_k$, by the definition of $T_\saw$ in Section \ref{section-SAW}, the subtree rooted at $v_k$ should be deleted. We call this operation the ``Effect of Order". And let $\widetilde{T}^{\cons}_l$ denote the tree resulting from deleting all such subtrees from $T^\cons_l$. Deleting the types (columns and rows) in the branching matrix $M_l^\cons$ violating this additional rule, we have the branching matrix $\widetilde{M}_l^\cons$ generating $\widetilde{T}^\cons_l$.
It is still obvious that $\widetilde{T}^{\cons}_l$ is a supertree of $T_\saw(\lattice^\cons,v)$ because it contains all self-avoiding walks in the latter.

\subsection{SSM of hard-hexagon and RWIM}
We generate the branching matrices $M_l^\cons$ and $\widetilde{M}_l^\cons$ by the rules defined as above. The data for these branching matrices is available in our online appendix~\cite{bmdata}.
The largest eigenvalues $\lambda^*$ of branching matrices $M^{\text{HH}}_l$ and $\widetilde{M}^{\HH}_l$ for $l=4,6,8$ are shown in Table~\ref{table:1} and~\ref{table:2} respectively.
Even when $l = 4$, the largest eigenvalue of $\widetilde{M}^{\HH}_l$ is less than 4.047. By Lemma \ref{lemma-bm-ssm}, the independent sets of $\lattice^\HH$ exhibit SSM.

\begin{table}[H]
  \centering
\begin{tabular}{|c|c|c|c|}
  \hline
  Max length of Avoiding-cycles & Effect of Order & Number of Types & $\lambda^*$ \\
  \hline
  4 & No & 55 & 4.5064  \\
  \hline
  6 & No & 493 & 4.3864  \\
  \hline
  8 & No & 5479 & 4.3282  \\
  \hline
\end{tabular}
\caption{The largest eigenvalues of $M^{\HH}_l$}
\label{table:1}
\end{table}

\begin{table}[H]
  \centering
\begin{tabular}{|c|c|c|c|}
  \hline
  Max length of Avoiding-cycles & Effect of Order & Number of Types & $\lambda^*$ \\
  \hline
  4 & Yes & 35 & 3.6857 \\
  \hline
  6 & Yes & 282 & 3.5872 \\
  \hline
  8 & Yes & 2858 & 3.5439 \\
  \hline
\end{tabular}
\caption{The largest eigenvalues of $\widetilde{M}^{\HH}_l$}
\label{table:2}
\end{table}



The largest eigenvalues $\lambda^*$ of branching matrice $M^{\text{RWIM}}_l$ and $\widetilde{M}^{\RWIM}_l$ for $l=4,6,8$ are shown in Table~\ref{table:3} and~\ref{table:4} respectively.
When $l = 8$, the largest eigenvalue of $\widetilde{M}^{\RWIM}_l$ is 4.0147, which is less than 4.047. By Lemma \ref{lemma-bm-ssm}, the independent sets of $\lattice^\RWIM$ exhibit SSM. Together with the above discussion, this gives us a (computer-aided) proof of Theorem~\ref{thm-ssm-HH-RWIM}.

\begin{table}[H]
  \centering
\begin{tabular}{|c|c|c|c|}
  \hline
  Max length of Avoiding-cycles & Effect of Order & Number of Types & $\lambda^*$ \\
  \hline
  4 & No & 81 & 4.7273 \\
  \hline
  6 & No & 1003 & 4.6136 \\
  \hline
  8 & No & 13053 & 4.5533 \\
  \hline
\end{tabular}
\caption{The largest eigenvalues of $M_l^{\RWIM}$}
\label{table:3}
\end{table}

\begin{table}[H]
  \centering
\begin{tabular}{|c|c|c|c|}
  \hline
  Max length of Avoiding-cycles & Effect of Order & Number of Types & $\lambda^*$ \\
  \hline
  4 & Yes & 57 & 4.1774 \\
  \hline
  6 & Yes & 603 & 4.0632 \\
  \hline
  8 & Yes & 7238 & 4.0132 \\
  \hline
\end{tabular}
\caption{The largest eigenvalues of $\widetilde{M}^{\RWIM}_l$}
\label{table:4}
\end{table}

\section{Absence of SSM along self-avoiding walks for NAK}
The supertree $T_l^\NAK$ and $\widetilde{T}_l^\NAK$ of $T_\saw(\lattice^\NAK,v)$ generated respectively by branching matrices $M^{\NAK}_l$ and $\widetilde{M}^{\NAK}_l$ can be generated in the same way as stated in Section~\ref{section-supertree-construction}.
The data for these branching matrices is available in our online appendix~\cite{bmdata}.
The largest eigenvalues of $M^{\NAK}_l$ and $\widetilde{M}^{\NAK}_l$ are shown in Table~\ref{table:5} and~\ref{table:6} respectively. Even when $l = 8$, the largest eigenvalue $\lambda^*$ of $\widetilde{M}^{\NAK}_l$ is still far away from what we need in Lemma~\ref{lemma-bm-ssm}, which is $<\gamma=\inf_{x \in [0,+\infty)}\frac{[1+(1+x)^d](1+x)}{dx}\approx 3.917$ for the maximum degree $d+1=8$.

\begin{table}[H]
  \centering
\begin{tabular}{|c|c|c|c|}
  \hline
  Max length of Avoiding-cycles & Effect of Orders & Number of Types & $\lambda^*$ \\
  \hline
  4 & No & 157 & 6.3876  \\
  \hline
  6 & No & 2949 & 6.1894  \\
  \hline
  8 & No & 63205 & 6.0972  \\
  \hline
\end{tabular}
\caption{The largest eigenvalues of $M^{\NAK}_l$}\label{table:5}
\end{table}

\begin{table}[H]
  \centering
\begin{tabular}{|c|c|c|c|}
  \hline
  Max length of Avoiding-cycles & Effect of Orders & Number of Types & $\lambda^*$ \\
  \hline
  4 & Yes & 85 & 4.9883 \\
  \hline
  6 & Yes & 1293 & 4.8275 \\
  \hline
  8 & Yes & 25262 & 4.7587 \\
  \hline
\end{tabular}
\caption{The largest eigenvalues of $\widetilde{M}^{\NAK}_l$}\label{table:6}
\end{table}

We are going to prove that SSM does not hold for a self-avoiding walk tree for $\lattice^\NAK$. We define a homogeneous order of neighbors as follows. For each vertex $v$ in $\lattice^\NAK$, let $\{\text{NW},\text{N},\text{NE},\text{E},\text{SE},\text{S},\text{SW},\text{W}\}$ denote the eight directions leaving vertex $v$. We assign each direction a rank (from 1 to 8) to define the order~$>_v$ of neighbors for $v$, and assume that NW, N and NE are ranked 1,2 and 3 respectively, as shown in Figure~\ref{fig:3}. Let $T_\saw^\NAK=T_\saw(\mathbb{L}^\NAK,v)$ be the self-avoiding walk tree given by this order of neighbors.
Since $\mathbb{L}^\NAK$ is symmetric and the order is homogeneous, the $T_\saw^\NAK$ is isomorphic for all vertices $v$.

\begin{Thm}\label{thm-NAK-no-SSM}
The independent sets of $T_\saw^\NAK$ does not exhibit SSM.
\end{Thm}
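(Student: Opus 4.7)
The plan is to prove the contrapositive of Proposition~\ref{prop:1}: exhibit an infinite subtree of $T_\saw^\NAK$ on which SSM fails, forcing SSM to fail on $T_\saw^\NAK$ itself. This subtree will be presented as a tree $T_M$ generated by some branching matrix $M$ as in Section~\ref{section-supertree-construction}, but this time the approximation runs in the \emph{opposite} direction: instead of over-approximating the SAW tree by a supertree, I seek $M$ so that $T_M$ embeds into $T_\saw^\NAK$ as a prefix-closed subtree. Operationally, each type of $M$ will encode enough recent move history (the last few steps of the corresponding walk) to guarantee that every infinite path of $T_M$ traces out a legitimate self-avoiding walk in $\lattice^\NAK$ which survives the Effect-of-Order pruning under the homogeneous ordering of Figure~\ref{fig:3}; constructing $M$ then reduces to a finite enumeration in the spirit of $\widetilde M^\NAK_l$, but keeping only those transitions whose realizability can be \emph{certified} from the recorded history.

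Once such an $M$ is in hand, I would show that SSM fails on $T_M$ by the instability analysis of the tree recursion~\eqref{branching-recursion} at $\lambda = 1$, in the spirit of the necessary condition for correlation decay developed in~\cite{vera}. At a positive fixed vector $\boldsymbol{x}^*$ of the multi-type recursion (symmetric across types of common out-degree $d$, so that $x^*_i = x^*$ with $x^* = 1/(1+x^*)^d$), the Jacobian equals $-\tfrac{x^*}{1+x^*}M$, whose spectral radius is $\tfrac{x^*}{1+x^*}\lambda^*(M)$. Once this exceeds $1$---equivalently, once $\lambda^*(M)$ exceeds the critical value $d_c \approx 4.15$ solving $d_c^{d_c} = (d_c-1)^{d_c+1}$---the fixed point becomes unstable, and the two boundary configurations on a cutset of $T_M$ at depth $\ell$ that fix every boundary vertex to $0$ vs.\ to $1$ respectively produce root marginals whose difference stays bounded away from zero as $\ell\to\infty$, refuting WSM and hence SSM on $T_M$. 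The contrapositive of Proposition~\ref{prop:1} then yields the theorem.

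Numerical verification following the pattern of Tables~\ref{table:5}--\ref{table:6} should confirm $\lambda^*(M) > d_c$; the supertree eigenvalue $\lambda^*(\widetilde M^\NAK_8) \approx 4.76$ is already comfortably above $d_c$, giving strong evidence that a carefully trimmed subtree branching matrix can retain its eigenvalue above the instability threshold. The main obstacle is precisely that the construction runs in the opposite direction from Section~\ref{section-supertree-construction}: there one simply drops cycle constraints to enlarge the tree while keeping few types, whereas here one must \emph{enforce} self-avoidance together with the Effect of Order along every infinite branch, which typically requires types to carry a longer window of history. The delicate point will be to enlarge the type set enough to certify these constraints on every path without simultaneously dropping $\lambda^*(M)$ below $d_c$.
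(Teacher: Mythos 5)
Your outline coincides with the paper's strategy (exhibit a branching-matrix-generated subtree of $T_\saw^\NAK$, show the tree recursion~\eqref{branching-recursion} has a repelling fixed point so WSM fails there, and conclude by Proposition~\ref{prop:1}), but the two load-bearing steps are respectively missing and invalid. The subtree construction is never carried out: you describe a program (``keep only transitions certifiable from recorded history'') and yourself flag it as ``the main obstacle,'' but you exhibit no matrix and give no argument that any certifiable family of walks retains enough growth. This is the crux, and the paper resolves it with a concrete trick: restrict to walks that never step to SW, S, or SE and never immediately reverse a W/E step --- such walks can close no cycle and are therefore automatically self-avoiding --- and additionally forbid the turns NW-then-E and N-then-E, so that under the ordering of Figure~\ref{fig:3} (which ranks NW, N, NE first) the Effect of Order never prunes a generated branch. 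This yields an explicit $6\times 6$ matrix $M_S$, collapsed to the $3\times 3$ matrix $M_S'$ of~\eqref{eq:BM-NAK-simple}. Noting that a supertree eigenvalue like $\lambda^*(\widetilde{M}^\NAK_8)\approx 4.76$ is ``comfortably above threshold'' carries no weight here, since supertree eigenvalues bound the subtree's from above, not below.

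Second, your instability criterion does not apply to the matrix this construction produces. You assume a fixed point symmetric across types of \emph{common} out-degree $d$, so that the Jacobian is $-\tfrac{x^*}{1+x^*}M$ and instability reduces to $\lambda^*(M)>d_c\approx 4.15$. But the subtree obtained is not of uniform out-degree (in $M_S'$, type $X$ has out-degree $4$ and type $Y$ has out-degree $5$), so the fixed point is not symmetric, the Jacobian at the fixed point is not a scalar multiple of $M$, and the threshold $d_c^{d_c}=(d_c-1)^{d_c+1}$ is not the relevant condition. One must instead bound the actual fixed point $(\hat{x},\hat{y})$ of the two-type recursion and show directly that the spectral radius of the Jacobian there exceeds $1$; the paper does this by locating a point $(\tilde{x},\tilde{y})$ provably dominating $(\hat{x},\hat{y})$, using entrywise monotonicity of the Jacobian, and computing $\lambda^*\approx 1.0044$. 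That margin is razor-thin, so replacing the correct computation by the uniform-arity heuristic is not a harmless shortcut --- it is exactly where the proof could silently fail.
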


\subsection{Subtree construction}
The above theorem is proved by constructing a subtree of $T_\saw^\NAK$ which does not exhibit weak spatial mixing. Then by Proposition~\ref{prop:1}, $T_\saw^\NAK$ does not exhibit SSM.


This subtree of $T_\saw^\NAK$ is constructed by designing a branching matrix in which each generated path corresponds to a self-avoiding walk in $\lattice^\NAK$ and does the necessary truncation for the cycle-closing step as in $T_\saw^\NAK$. This approach is used~\cite{vera} on grid lattice.

\begin{figure}[H]
\centering
  \includegraphics[width=2.1in]{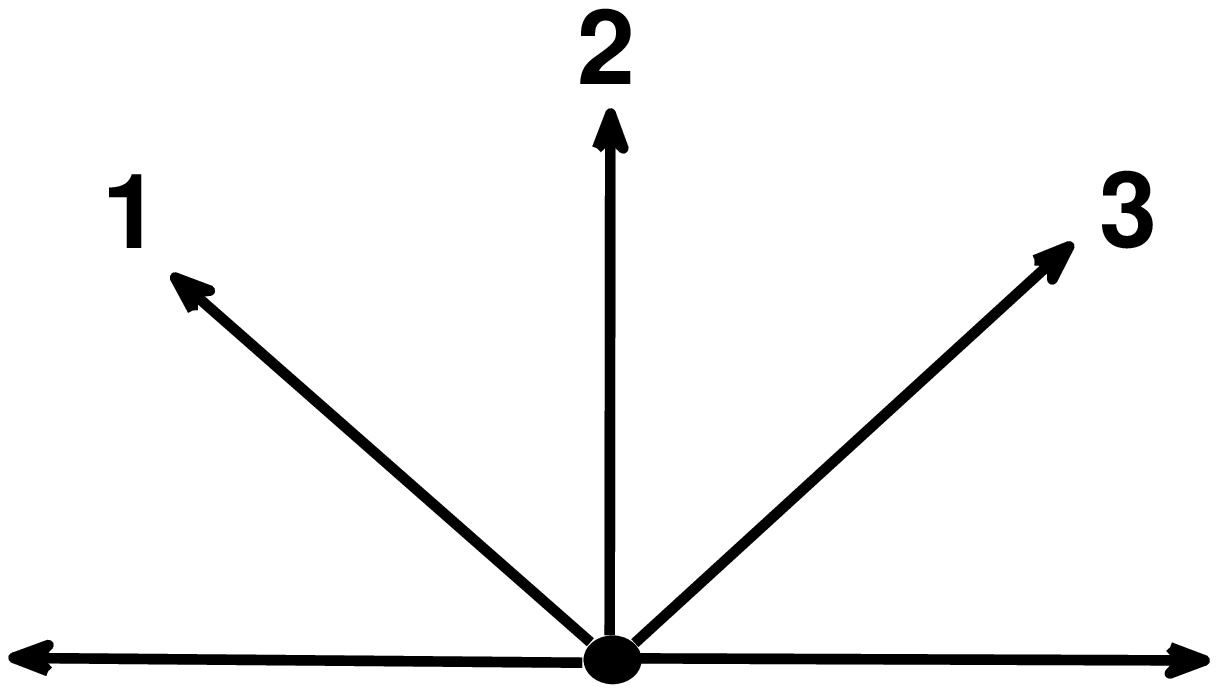}\\
  \caption{Order of neighbors for any vertex in $\lattice^{\NAK}$}\label{fig:3}
\end{figure}

We consider the walks that never go to the three directions SW,S, and SE on south, and never goes back to the direction where it just came from (first going W then E, or first going E then W). Such walks must be self-avoiding since no cycle can be formed. We then further forbid the moves first going NW then E and the moves first going N then E. The remaining walks can be described by a $6\times 6$ branching matrix $M_S$ defined as follows, whose corresponding tree is denoted as $T_{M_S}$.



\begin{align*}
&\,\,\,\,\,\text{O}\,\,\, \text{W}\,\, \text{NW}\,\, \text{N}\,\, \text{NE} \,\, \text{E}\notag\\
M_S =
\begin{matrix}
\text{O}\\
\text{W}\\
\text{NW}\\
\text{N}\\
\text{NE}\\
\text{E}
\end{matrix}
\,\,
&\begin{pmatrix}
0&   1 & 1 & 1 & 1 & 1\\
0&   1 & 1 & 1 & 1 & 0  \\
0&  1 & 1 & 1 & 1 & 0  \\
0&  1 & 1 & 1 & 1 & 0  \\
0&  1 & 1 & 1 & 1 & 1  \\
0&   0 & 1 & 1 & 1 & 1
\end{pmatrix}
\end{align*}

Type O corresponds to the starting point $v$ of the walks. The other types correspond to the five remaining directions $\{\text{W},\text{NW},\text{N},\text{NE},\text{E}\}$, each of which represents the direction of the last step of a path.

\begin{Lem}
The tree $T_{M_S}$ generated by the branching matrix $M_S$ is a subtree of $T_\saw^{\NAK}$.
\end{Lem}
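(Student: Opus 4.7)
The plan is to verify two things: (i) every root-to-leaf path in $T_{M_S}$ corresponds to a self-avoiding walk in $\lattice^{\NAK}$, so that these walks appear in the tree of self-avoiding walks from which $T_\saw^\NAK$ is constructed; and (ii) none of these walks is deleted by the cycle-closing truncation described in Section~\ref{section-SAW}.

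For (i), observe that the only move directions allowed by $M_S$ are the five non-southward directions $\{\text{W},\text{NW},\text{N},\text{NE},\text{E}\}$, so the $y$-coordinate is non-decreasing along any walk in $T_{M_S}$. In addition, $M_S$ forbids the reversals $\text{W}\to\text{E}$ and $\text{E}\to\text{W}$, so within any maximal sub-walk that stays in one row the $x$-coordinate is strictly monotone. These two properties together prevent any vertex from being revisited, so every walk in $T_{M_S}$ is self-avoiding in $\lattice^\NAK$.

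For (ii), fix a walk $v_0\to v_1\to\cdots\to v_\ell$ in $T_{M_S}$ and an index $i$ with $0\le i\le\ell-2$ such that $v_\ell$ is adjacent to $v_i$ in $\lattice^\NAK$; we must show $v_\ell<_{v_i} v_{i+1}$. Since the $y$-coordinate is non-decreasing and adjacency forces $|y_\ell-y_i|\le 1$, we have $y_\ell-y_i\in\{0,1\}$. The case $y_\ell=y_i$ is impossible: then the entire sub-walk from $v_i$ to $v_\ell$ stays in one row with monotone $x$-coordinate, giving $|x_\ell-x_i|=\ell-i\ge 2$, which contradicts adjacency. Thus $y_\ell=y_i+1$, and there is a unique step $j$ with $i+1\le j\le\ell$ at which the walk crosses from row $y_i$ to row $y_i+1$; denote by $d_1,d_2,d_3$ the directions used before, at, and after this crossing step (with $d_1,d_3$ undefined when the corresponding sub-walk is empty). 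Then $d_2\in\{\text{NW},\text{N},\text{NE}\}$ and, when defined, $d_1,d_3\in\{\text{W},\text{E}\}$; the extra prohibitions $\text{NW}\to\text{E}$ and $\text{N}\to\text{E}$ built into $M_S$ force $d_3=\text{W}$ whenever $d_2\in\{\text{NW},\text{N}\}$.

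Adjacency with $y_\ell-y_i=1$ pins $x_\ell-x_i\in\{-1,0,1\}$, so the direction from $v_i$ to $v_\ell$ lies in $\{\text{NW},\text{N},\text{NE}\}$ and has rank in $\{1,2,3\}$ by Figure~\ref{fig:3}. When $j>i+1$ the direction from $v_i$ to $v_{i+1}$ is $d_1\in\{\text{W},\text{E}\}$, whose rank is at least $4$, so $v_\ell<_{v_i}v_{i+1}$ is immediate. The only delicate case is $j=i+1$, in which $v_{i+1}$ is in direction $d_2\in\{\text{NW},\text{N},\text{NE}\}$ from $v_i$; a short check over the allowed values of $d_2$, of $d_3$, and of the length of the post-crossing sub-walk enumerates the few geometric configurations that actually produce adjacency, and in each of them the direction from $v_i$ to $v_\ell$ has strictly smaller rank than $d_2$ (for example, $d_2=\text{N}$ forces $d_3=\text{W}$ and a length-one post-crossing sub-walk, putting $v_\ell$ in direction $\text{NW}$ from $v_i$, with $\mathrm{rank}(\text{NW})=1<2=\mathrm{rank}(\text{N})$). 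The main obstacle is exactly this final case analysis; the extra prohibitions $\text{NW}\to\text{E}$ and $\text{N}\to\text{E}$ baked into $M_S$ are designed precisely to eliminate the truncation-triggering adjacencies that would otherwise appear.
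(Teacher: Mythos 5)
Your proof is correct and takes essentially the same route as the paper's: self-avoidance follows from the no-south/no-reversal structure of $M_S$, and the truncation condition follows from observing that any cycle-closing endpoint $v_\ell$ must lie in direction NW, N, or NE (rank $1$--$3$) of the revisited vertex $v_i$, with the prohibitions $\mathrm{NW}\to\mathrm{E}$ and $\mathrm{N}\to\mathrm{E}$ eliminating exactly the configurations in which $v_\ell$ could outrank the exit direction $v_{i+1}$. The final enumeration you leave partially implicit does check out ($d_2=\mathrm{NW}$ forces $d_3=\mathrm{W}$ and never yields adjacency; $d_2=\mathrm{NE}$ with $d_3=\mathrm{W}$ lands $v_\ell$ at N or NW of $v_i$, rank $<3$) and coincides with the paper's list of ``bad cases.''
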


\begin{proof}
Since $M_S$ forbids all walks going to the three directions on south or going back to where it is from, the walks generated by $M_S$ must be self-avoiding. We only need to verify that $M_S$ forbids the walks whose next step closes a cycle from a larger direction than the direction starting the cycle, as in the definition of $T_\saw$ given in Section~\ref{section-SAW}. Specifically, it is sufficient to show that $M_S$ forbids all such walks $v\to\ldots \to w \to x \to\ldots \to u$  that $uw\in E$ and $u >_w x$.
Since the walk never goes south (SW,S and SE), the cycle closing step $u\to w$ must be going to one of the three directions SW, S and SE, which means $u$ is in one of the three directions NW,N and NE from $w$. Since NW,N and NE are ranked 1,2, and 3 respectively, the only bad cases are: (1) $x$ is in NW of $w$ and $u$ is in N or NE of $w$; and (2) $x$ is in N of $w$ and $u$ is in NE of $w$. Neither of cases can happen because $M_S$ forbid any walk first going NW then E, or first going N then E. This shows that $T_{M_S}$ is a subtree of $T_\saw^{\NAK}$.

\end{proof}

The branching matrix $M_S$ can be reduced to a $3 \times 3$ matrix $M_S'$ shown below where the three types corresponds to the respective classes $\{\text{O}\}$, $X=\{\text{W},\text{NW},\text{N},\text{E}\}$, $Y=\{\text{NE}\}$ of old types.
\begin{align}
&\,\,\,\,\,O\,\,\,\, X\,\,\,\, Y\notag\\
M_S' =
\begin{matrix}
O\\
X\\
Y
\end{matrix}
\,\,&\begin{pmatrix}
0 & 4 & 1\\
0 & 3 & 1 \\
0 & 4 & 1
\end{pmatrix}\label{eq:BM-NAK-simple}
\end{align}
It is easy to verify that $M_S$ and $M_S'$ generate the same tree, since for any old types in the same class, the total number of transitions in $M_S$ to all old types in a class is the same, and is given by $M_S'$.

\subsection{Lower bound of correlation decay}

Simulation results show that even weak spatial mixing may not hold for $T_{M_S}$. When the depth of $T_{M_S}$ grows sufficiently large, the value of $|p_{G,v}^+ - p_{G,v}^-|$ approaches to 0.0871958, as shown in Figure \ref{fig:4}, where $p_{G,v}^+$ is the marginal probability of root $v$ being unoccupied when the leaves of $T_{M_S}$ are all fixed to occupied and $p_{G,v}^+$ is the marginal probability of root $v$ being unoccupied when the leaves of $T_{M_S}$ are all fixed to unoccupied. All the leaves are at the same distance from the root $v$.

\begin{figure}[H]
\centering
  \includegraphics[width=4in]{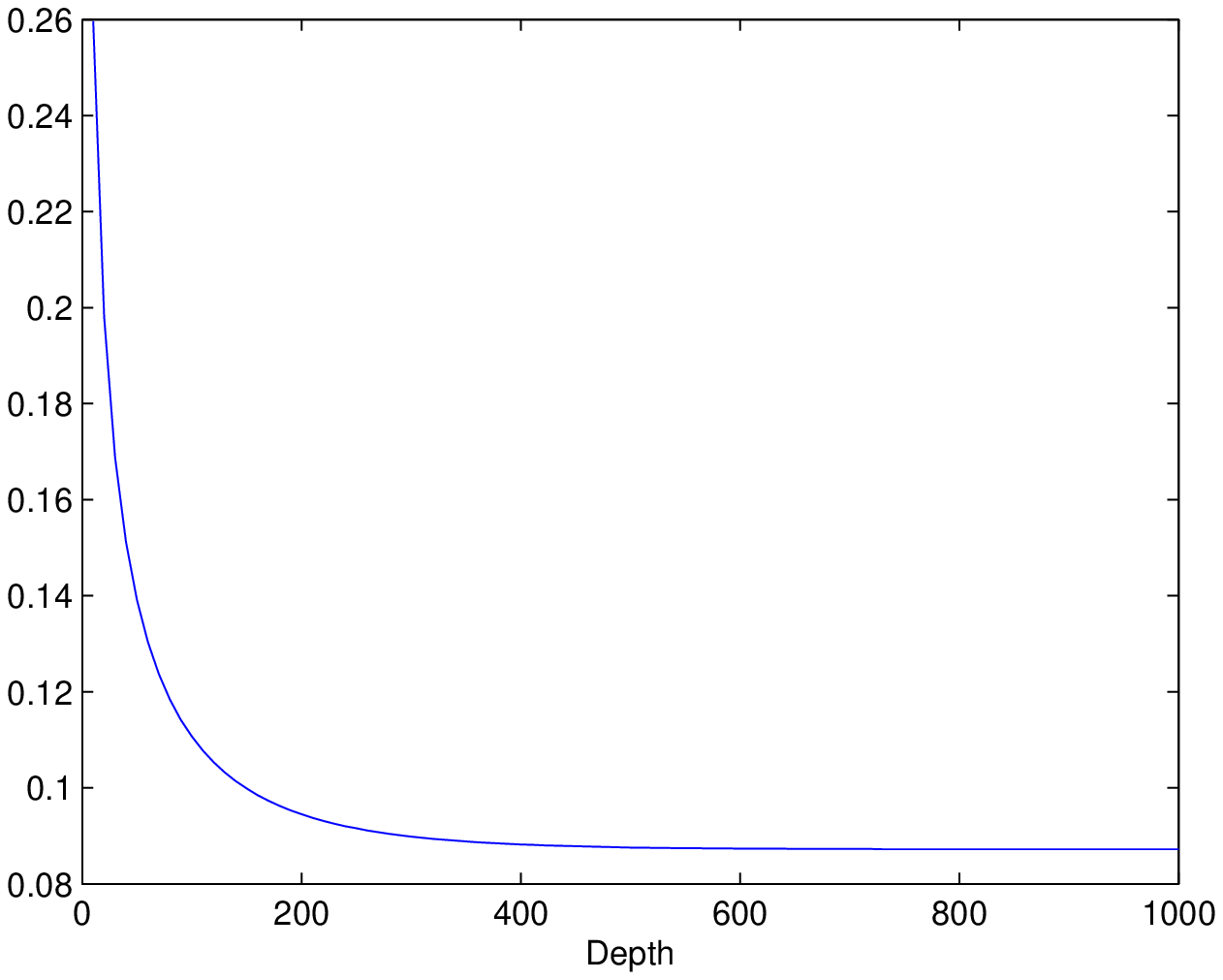}\\
  \caption{The value of $|p_{G,v}^+ - p_{G,v}^-|$ when $T_{M_S}$ is going deep.}\label{fig:4}
\end{figure}

We then rigorously prove that $T_{M_S}$ indeed does not exhibit WSM. Since $T_{M_S}$ is a subtree of $T_\saw^\NAK$, this implies that the self-avoiding walk tree $T_\saw^\NAK$ of $\lattice^\NAK$ does not exhibit SSM, proving Theorem~\ref{thm-NAK-no-SSM}.
\begin{Lem}
WSM does not hold on $T_{M_S}$.
\end{Lem}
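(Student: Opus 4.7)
The plan is to analyze the two-dimensional tree recursion on $T_{M_S}$ (equivalently, on the collapsed $3\times 3$ branching matrix $M'_S$ of~\eqref{eq:BM-NAK-simple}) and exhibit two boundary conditions whose induced root marginals stay separated by a positive constant as the depth grows. Writing $R = (1-p)/p$ and specializing~\eqref{eq-ratio-recursion} to the fact that a type-$X$ node has three type-$X$ children and one type-$Y$ child, while a type-$Y$ node has four type-$X$ children and one type-$Y$ child, yields the two-dimensional map
\begin{equation*}
F(R_X,R_Y) \;=\; \left(\, \frac{1}{(1+R_X)^{3}(1+R_Y)},\;\frac{1}{(1+R_X)^{4}(1+R_Y)} \,\right).
\end{equation*}
The ratio at the root has the same functional form as the second coordinate of $F$, so any separation between the limits of two iterates of $F$ translates directly into a separation between $p^{+}_{G,v}$ and $p^{-}_{G,v}$.

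The map $F$ is strictly decreasing in each coordinate, so $F\circ F$ is coordinate-wise strictly increasing. Let $a^{(n)}$ and $b^{(n)}$ denote the $n$-fold iterates of $F$ starting from $(\infty,\infty)$ and $(0,0)$, corresponding to leaves fixed to occupied and to unoccupied. First I would observe that monotonicity of $F^{2}$ forces $a^{(2k)}$ to be decreasing, $b^{(2k)}$ to be increasing, both bounded, and $a^{(2k)}\ge b^{(2k)}$ coordinate-wise for every $k$. Both sequences therefore converge to fixed points $a^{\ast}\ge b^{\ast}$ of $F^{2}$, and it suffices to show $a^{\ast}\neq b^{\ast}$.

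To rule out $a^{\ast} = b^{\ast}$ I would exploit the instability of the unique positive fixed point $R^{\ast} = (R_X^{\ast}, R_Y^{\ast})$ of $F$. Dividing the two fixed-point equations yields $R_Y^{\ast} = R_X^{\ast}/(1+R_X^{\ast})$, and substitution reduces the system to the quartic
\begin{equation*}
2R_X^{4} + 5R_X^{3} + 4R_X^{2} + R_X - 1 \;=\; 0,
\end{equation*}
which is strictly increasing on $(0,\infty)$ and hence has a unique positive root $R_X^{\ast}$; evaluating the left side at $R_X = 0.33$ and $R_X = 0.35$ brackets it as $R_X^{\ast}\in(0.33,\,0.35)$. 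The Jacobian $J$ of $F$ at $R^{\ast}$ has entries $-3R_X^{\ast}/(1+R_X^{\ast})$, $-R_X^{\ast}/(1+R_Y^{\ast})$, $-4R_Y^{\ast}/(1+R_X^{\ast})$, $-R_Y^{\ast}/(1+R_Y^{\ast})$, all negative, so $J^{2}$ is entrywise positive and by Perron--Frobenius its leading eigenvalue $\max_i\lambda_i(J)^{2}$ has a strictly positive eigenvector. If $a^{\ast} = b^{\ast} = R^{\ast}$ held, both sequences would eventually fall inside a neighborhood of $R^{\ast}$ on which $F^{2}$ is governed by $J^{2}$, and the strictly positive perturbations $a^{(2k)}-R^{\ast}$ and $R^{\ast}-b^{(2k)}$ would have nonzero components along this positive unstable eigenvector, contradicting monotone convergence to $R^{\ast}$ as soon as $\max_i|\lambda_i(J)|>1$ is established.

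The main obstacle is the rigorous verification of $\max_i|\lambda_i(J)|>1$, since numerically the unstable eigenvalue sits only slightly below $-1$. I plan to discharge this with a short polynomial computation: using $R_Y^{\ast} = R_X^{\ast}/(1+R_X^{\ast})$ and the quartic, the characteristic polynomial $p(t) = t^{2} - \mathrm{tr}(J)\,t + \det(J)$ of $J$ evaluated at $t=-1$ reduces, after clearing positive denominators and substituting the quartic, to $-r^{2}(2 - 5r - 2r^{2})$ with $r = R_X^{\ast}$; this is strictly negative whenever $r < (\sqrt{41}-5)/4 \approx 0.351$, which follows from the bracket $R_X^{\ast}<0.35$. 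Hence $p(-1)<0$, so $J$ has a real eigenvalue strictly less than $-1$, giving the required separation $a^{\ast}>b^{\ast}$ and a uniform positive lower bound on $|p^{+}_{G,v}-p^{-}_{G,v}|$ in agreement with the value $\approx 0.0871958$ observed in Figure~\ref{fig:4}.
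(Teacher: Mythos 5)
Your proposal is correct, and it proves the same core fact as the paper — that the unique positive fixed point of the two-dimensional recursion induced by $M_S'$ is unstable because its Jacobian has spectral radius exceeding $1$ — but the two key steps are discharged by genuinely different means. For the eigenvalue bound, the paper bounds the fixed point from above numerically (via the auxiliary increasing functions $\hat F_1,\hat F_2$), uses the entrywise antitonicity of the Jacobian entries together with Perron--Frobenius comparison to get $\lambda^*(\widehat{\mathcal J})>\lambda^*(\widetilde{\mathcal J})=1.0044$, and then cites a dynamical-systems reference for ``repelling fixed point $\Rightarrow$ no convergence.'' You instead eliminate $\hat y=\hat x/(1+\hat x)$ exactly, reduce the fixed-point condition to the quartic $2r^4+5r^3+4r^2+r-1=0$, and show by an exact substitution that the characteristic polynomial of the true Jacobian satisfies $p(-1)<0$ whenever $r<(\sqrt{41}-5)/4$, which follows from a two-point sign check of the quartic; I verified the identity $1-r-6r^2=r^2(2r^2+5r-2)$ modulo the quartic, so this is right. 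Your route buys a fully symbolic certificate for the instability (the only numerics is evaluating a quartic at $r=0.35$), whereas the paper's requires trusting a floating-point eigenvalue computation; it also replaces the external citation by a self-contained argument combining the monotone even-step iterates $a^{(2k)}\downarrow a^*$, $b^{(2k)}\uparrow b^*$ with the positivity of $J^2$ and its Perron eigenvector. Two points deserve a sentence each when you write this up: (i) the step from $a^*\neq b^*$ to a uniform gap in the root marginals needs the observation that $a^*>b^*$ strictly entrywise (forced by strict monotonicity of $F^2$), so that the root map $\frac{1}{(1+x)^4(1+y)}$ separates them; and (ii) the linearization argument near $R^*$ should be phrased via the left Perron eigenvector $u>0$ of $J^2$, using that $\langle u,\,a^{(2k)}-R^*\rangle$ is bounded below by a constant times $\|a^{(2k)}-R^*\|$ because the perturbation is entrywise positive, so that the $o(\cdot)$ remainder cannot spoil the expansion factor $\rho>1$.
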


\begin{proof}
Let $S$ be the cutset of $T_{M_S}$ which contains all the vertices at distance $\ell$ from the root. Supposed we fix the vertices in $S$ to be all unoccupied or to be all occupied, by induction it is easy to see that the marginal distributions at all vertices of the same type at the same level of the tree above $S$ are the same.
Recall that $T_{M_S}$ is captured by the simplified branching matrix $M_S'$ defined in~\eqref{eq:BM-NAK-simple} of three types $O,X,Y$. Let $x^{(t)}$ (and $y^{(t)}$) denote the ratio between probabilities of being occupied and unoccupied at a vertex of type X (and type Y) at distance $t$ from the boundary $S$.
Then by equation (\ref{branching-recursion}), we have the following recursion:
\begin{align*}
x^{(t)}=F_1\left(x^{(t-1)},y^{(t-1)}\right)
&=\frac{1}{\left(1+x^{(t-1)}\right)^3\left(1+y^{(t-1)}\right)},\\
y^{(t)}=F_2\left(x^{(t-1)},y^{(t-1)}\right)
&=\frac{1}{\left(1+x^{(t-1)}\right)^4\left(1+y^{(t-1)}\right)},
\end{align*}
and $x^{(0)}=y^{(0)}=0$ or $x^{(0)}=y^{(0)}=\infty$ depending on whether the vertices in $S$ are fixed to be unoccupied or occupied. The weak spatial mixing on $T_{M_S}$ holds only if the system converges as $t\to\infty$.

The Jacobian matrix is given by
\begin{equation*}
\mathcal{J} =
\begin{pmatrix}
|\frac{\partial F_1}{\partial x}| & |\frac{\partial F_1}{\partial y}| \\
|\frac{\partial F_2}{\partial x}| & |\frac{\partial F_2}{\partial y}|
\end{pmatrix}
=
\begin{pmatrix}
\frac{3}{(1+x)^4(1+y)} & \frac{1}{(1+x)^3(1+y)^2} \\
\frac{4}{(1+x)^5(1+y)} & \frac{1}{(1+x)^4(1+y)^2}
\end{pmatrix}.
\end{equation*}
Suppose that $(\hat{x},\hat{y})$ be the unique nonnegative fixed point of the system satisfying that $\hat{x}=F_1(\hat{x},\hat{y})$ and $\hat{y}=F_2(\hat{x},\hat{y})$. Let $\widehat{\mathcal{J}}=\left.\mathcal{J}\right|_{x=\hat{x},y=\hat{y}}$ be the Jacobian matrix at the fixed point. We are going to show that there exists a nonnegative fixed point $(\hat{x},\hat{y})$ such that $\widehat{\mathcal{J}}$ has an eigenvalue greater than 1. Then due to~\cite{robinson}, the function around the the fixed point is repelling and hence it is impossible to converge to this unique fixed point.


We define the functions $\delta_1(x,y) = \frac{1}{(1+x)^3(1+y)} - x$ and $\delta_2(x,y) = \frac{1}{(1+x)^4(1+y)} - y$. Then $(\hat{x}, \hat{y})$ is the real positive solution of the equations $\delta_1(x,y) = 0$ and $\delta_2(x,y) = 0$. Then we have
\begin{equation*}
\frac{\hat{x}}{1+\hat{x}} - \frac{1}{\hat{x}(1+\hat{x})^3} + 1 = 0,
\end{equation*}
\begin{equation*}
\sqrt[4]{\frac{\hat{y}^3}{1+\hat{y}}} - \sqrt[4]{\frac{1}{\hat{y}(1+\hat{y})}} + 1 = 0.
\end{equation*}

We define that $\hat{F}_1(x) = \frac{x}{1+x} - \frac{1}{x(1+x)^3} + 1$ and $\hat{F}_2(y) = \sqrt[4]{\frac{y^3}{1+y}} - \sqrt[4]{\frac{1}{y(1+y)}} + 1$. It holds that
\begin{equation*}
\hat{F}'_1(x) = \frac{1}{(1+x)^2} + \frac{1+4x}{x^2(1+x)^4} > 0,
\end{equation*}
\begin{equation*}
\hat{F}'_2(y) = \frac{1}{4} [y(1+y)]^{-\frac{5}{4}} (1+5y+2y^2) > 0,
\end{equation*}
for $x,y> 0$.
Therefore $\hat{F}_1(x)$ and $\hat{F}_2(y)$ are increasing for all positive $x, y$.

When $\tilde{x} = 0.3356$ and $\tilde{y} = 0.2513$, $\hat{F}_1(\tilde{x}) = 5.8531 \times 10^{-4} > 0$, $\hat{F}_2(\tilde{y}) = 1.8569 \times 10^{-4} > 0$. Hence we can conclude that $\tilde{x} > \hat{x}$ and $\tilde{y} > \hat{y}$. Since every entry of Jacobian matrix $\mathcal{J}$ is decreasing in both $x$ and $y$, then $\widetilde{\mathcal{J}}$ is entry-wise smaller than $\widehat{\mathcal{J}}$, where $\widetilde{\mathcal{J}}$ is the Jacobian matrix at point $(\tilde{x}, \tilde{y})$, given by
\begin{equation*}
\widetilde{\mathcal{J}} =
\begin{pmatrix}
0.7534 & 0.2681 \\
0.7522 & 0.2007
\end{pmatrix}.
\end{equation*}

The maximum eigenvalue of $\widetilde{\mathcal{J}}$ is $\lambda^*(\widetilde{\mathcal{J}}) = 1.0044 > 1$. Since both $\widetilde{\mathcal{J}}$ and $\widehat{\mathcal{J}}$ are positive matrices and $\widetilde{\mathcal{J}}<\widehat{\mathcal{J}}$ entry-wisely, we can conclude that $\lambda^*(\widehat{\mathcal{J}}) > \lambda^*(\widetilde{\mathcal{J}}) >1$. This proves the lemma.

\end{proof}


\section{Conclusions}
In this paper, we give PTAS for computing the capacities of two-dimensional codes with constraints HH and RWIM using strong spatial mixing. We also show that the capacity of two-dimensional code with constraint NAK may not be approximated efficiently this method. 
An important open direction is to generalize this approach to other constraints and higher dimensions.

\paragraph{Acknowledgment.} We are deeply grateful to Mordecai Golin for many helpful discussions and pointing us to the problem of computing hard-square entropy.





\end{document}